\newtheorem{theorem}{Theorem}
\newtheorem{lemma}{Lemma}
\newtheorem{definition}{Definition}
\newtheorem{statement}{Statement}
\begin{document}

\title{Towards identifying possible fault-tolerant advantage of quantum linear system algorithms in terms of space, time and energy}
\author{Yue Tu}
\affiliation{Department of Computer Science, The University of Pittsburgh, Pittsburgh, PA 15260, USA}

\affiliation{Department of Computational and Applied Mathematics, The University of Chicago, Chicago, IL 60637, USA}

\author{Mark Dubynskyi}

\affiliation{Mason Experimental Geometry Lab, Fairfax, VA 22030, USA}

\author{Mohammadhossein Mohammadisiahroudi}
\affiliation{Department of Industrial and Systems Engineering, Lehigh University, Bethlehem, PA 18015, USA}

\author{Ekaterina Riashchentceva}

\affiliation{The University of Portland, Portland, OR 97203, USA}

\author{Jinglei Cheng}

\affiliation{Department of Computer Science, The University of Pittsburgh, Pittsburgh, PA 15260, USA}

\author{Dmitry Ryashchentsev}

\affiliation{Charles Schwab Corporation, Westlake, TX 76262, USA}

\author{Tam\'{a}s Terlaky}

\affiliation{Department of Industrial and Systems Engineering, Lehigh University, Bethlehem, PA 18015, USA}

\author{Junyu Liu}
\affiliation{Department of Computer Science, The University of Pittsburgh, Pittsburgh, PA 15260, USA}

\date{\today}

\maketitle

\noindent\textbf{Quantum computing, a prominent non-Von Neumann paradigm beyond Moore's law, can offer superpolynomial speedups for certain problems. Yet its advantages in efficiency for tasks like machine learning remain under investigation, and quantum noise complicates resource estimations and classical comparisons. We provide a detailed estimation of space, time, and energy resources for fault-tolerant superconducting devices running the Harrow-Hassidim-Lloyd (HHL) algorithm, a quantum linear system solver relevant to linear algebra and machine learning. Excluding memory and data transfer, possible quantum advantages over the classical conjugate gradient method could emerge at $N \approx 2^{33} \sim 2^{48}$ or even lower, requiring ${O}(10^5)$ physical qubits, ${O}(10^{12}\sim10^{13})$ Joules, and ${O}(10^6)$ seconds under surface code fault-tolerance with three types of magic state distillation (15-1, 116-12, 225-1). Key parameters include condition number, sparsity, and precision $\kappa, s\approx{O}(10\sim100)$, $\epsilon\sim0.01$, and physical error $10^{-5}$. Our resource estimator adjusts $N, \kappa, s, \epsilon$, providing a map of quantum-classical boundaries and revealing where a practical quantum advantage may arise. Our work quantitatively determine how advanced a fault-tolerant quantum computer should be to achieve possible, significant benefits on problems related to real-world.}

\section{Introduction}\label{sec:introduction}
Quantum computing \cite{nielsen2010quantum} stands as the leading paradigm of next-generation computing technology. Unlike traditional computing, quantum computers can access and manipulate quantum states, which serve as carriers of information for computational purposes. One of the primary motivations for developing quantum computing is its potential to achieve a \emph{possible quantum advantage} over its classical counterparts in solving certain problems. This potential quantum advantage, studied primarily in terms of time cost within computational complexity theory, has been theoretically demonstrated in problems such as factoring \cite{shor1999polynomial}, searching \cite{grover1996fast}, and simulation \cite{lloyd1996universal}. These advancements offer hope for sustaining or surpassing Moore's law in the semiconductor industry.

However, beyond the time complexity estimated in the gate models from theoretical computer science, it is challenging to estimate and justify the possible quantum advantage in practice. First, practical cost estimation of quantum computing requires state-of-the-art knowledge, from detailed theory covering prefactors in front of the big-$O$ notation of complexity \cite{jennings2023efficient,dalzell2023end}, to explicit designs of quantum hardware, and it includes more comprehensive measurement such as time costs (measured in seconds), space costs (number of physical qubits), and in particular energy costs. Especially, the complex nature of quantitative energy efficiency estimation is highly uninvestigated, although the possible energy advantage of quantum computing algorithm is discussed mostly in the qualitative argument \cite{auffeves2022quantum,scholten2024assessing,liu2023potential}. Second, although the existence of potential quantum advantage for some algorithms is solidly justified in theory, it is challenging to prove that those algorithms can turn into real-world, significant benefits especially for commercial applications \cite{FASQ}. Finally, quantum states are extremely fragile and current quantum processors are noisy, making quantum error correction the only possible way to make large-scale, fault-tolerant quantum computing. Fault-tolerance, although sustainable in theory, requires lots of additional resources and experimental challenges, making precise resource estimation much more challenging.

In this work, we address those challenges by conducting a full-stack, energy-aware resource estimation for the so-called Harrow-Hassidim-Lloyd (HHL) algorithm \cite{harrow2009quantum}. The HHL algorithm provides a Quantum Linear System Algorithm (QLSA) that can be used for solving linear algebra problems. Given a linear equation $ A\ket{x} = \ket{b} $, the algorithm returns a quantum state $ \ket{x} = A^{-1} \ket{b} $ as a solution. For certain classes of matrices, it has been theoretically shown that the algorithm runs in $ \text{poly} (\log N) $ time for an $ N \times N $ matrix, making it exponentially faster than any known classical counterpart. Complexity-theoretic argument also suggests that the algorithm under certain settings are BQP-complete \cite{harrow2009quantum}. Since linear algebra applications are ubiquitous in modern science and technology, the algorithm is regarded as one of the most promising applications for large-scale, fault-tolerant quantum processors \cite{dervovic2018quantum}.

To identify the regime of possible, practical quantum advantages, we systematically investigate the costs of QLSA. The primary input of QLSA is a general Hermitian matrix $A$ of dimension $N = 2^n$, with row sparsity $s$ (where $s = \max_j ( \text{ \# of nonzero entries in row } j \text{ of } A)$), precision $\epsilon$, and condition number $\kappa$. For non-Hermitian matrices, one can effectively solve a corresponding Hermitian matrix instead, requiring only one additional qubit \cite{harrow2009quantum,dervovic2018quantum}. At the logical level, we primarily adopt the original and classic HHL framework \cite{harrow2009quantum,dervovic2018quantum}. For quantum simulation, we employ the one-sparse Hamiltonian simulation model to implement $e^{i A t}$, using \cite{childs2021theory} to estimate the Trotter error. Finally, we provide an explicit formula for estimating resource requirements at the logical level for HHL, particularly focusing on the number of different types of quantum gates. These estimates are validated using a quantum circuit state-vector simulation code written in Q\# \cite{qsharp}.

Moreover, we extend our analysis to resource estimations at the fault-tolerant level. To achieve this, we utilize the framework proposed by Litinski \cite{litinski2019game} to estimate the fault-tolerance resource. The framework assume Clifford gates are cheap and only accepts the number of $T$-gates as input, then search for the optimal surface code and magic state distillation scheme (we primarily consider 15-1, 116-12, 225-1) scheme that will minimize the space-time cost with logical error rate below a certain threshold (in this work we primarily set the logical error around $0.01$, around the same range of the precision $\epsilon$, and a superconducting device with physical error $10^{-5}$). Finally, we compare our fault-tolerant resource estimates with those of classical algorithms. In particular, we discuss the classical counterpart to the Quantum Linear System Algorithm (QLSA): the conjugate gradient (CG) method \cite{hestenes1952methods}. The CG method is widely regarded as the classical analogue of QLSA, as it similarly exploits the sparsity and well-conditioned nature of the problem. Its time complexity is given by $O(N s\kappa \log(1/\epsilon))$. We conduct a detailed analysis of its exact scaling behavior and provide an estimation of its clock cycle requirements.

For superconducting devices, the primary energy costs associated with executing the HHL algorithm stem from the cooling system. On the quantum side, we estimate the energy consumption based on the model proposed in \cite{martin2021energyusequantumdata}, assuming that the energy cost scales linearly with the number of qubits, the total energy cost is given by the product of the runtime, the number of physical qubits, and a cooling efficiency factor. To estimate the cooling efficiency, we refer to \cite{parker2023estimating}, which provides an approximation based on IBM Quantum System Two's dilution refrigerator, with an estimated energy efficiency of 6.25 Watts per qubit. However, it is important to note that cooling efficiency is highly dependent on the quantum computer’s architecture and may vary significantly as quantum hardware evolves. Consequently, this estimation should be viewed as a rough approximation rather than a precise metric.

On the classical side, we estimate the energy consumption using the power requirements of a typical desktop CPU. A standard desktop processor operating at a clock frequency of 1 GHz consumes approximately 50 Watts. This provides a baseline for evaluating the comparative energy efficiency of classical and quantum computations in the context of the QLSA algorithm \cite{cpupower}. In Appendix, we discuss how to make use of parallel cluster computing methods to perform similar algorithms, where more precise number could be found in the Top500 list \cite{top500_2024} for the best classical devices. 

Note that we consider only the costs incurred during the computation process. The costs associated with the interface between classical and quantum processors—such as uploading the matrix $A$ and the vector $\ket{b}$ to the quantum device, as well as downloading the state $\ket{x}$ to classical memory—are not addressed in this work. The uploading problem can be mitigated by fast quantum memory solutions, such as Quantum Random Access Memory (QRAM) \cite{giovannetti2008quantum}, whose circuits are, in fact, classically simulatable \cite{hann2021resilience}. Meanwhile, the downloading process depends on the specific needs of the user. Since classical users cannot handle exponentially large datasets \cite{aaronson2015read}, quantum tomography techniques \cite{huang2020predicting,liu2024towards} can be employed to extract partial information from $\ket{x}$.

Our work is organized as follows. In Section \ref{sec:results}, we discuss our results comparing quantum and classical approaches, along with their implications for computational speed and energy efficiency. In Section \ref{sec:methods}, we outline the key methodologies employed to obtain these results. At the logical level, this includes circuit implementation and exact scaling derivation, while at the physical level, it encompasses the implementation of the surface code optimizer. In Section \ref{sec:concoutlook}, we provide conclusions and an outlook on future research directions. Additional technical details, including a detailed formulation of the HHL algorithm, Trotter simulation strategies, surface code setups, classical counterparts, and energy analysis, are provided in the Appendix.

\section{Results}\label{sec:results}

In this chapter, we present the main theoretical and numerical results of our study. For the HHL resource estimation, we primarily use the following statement, 

\begin{statement}\label{state:logicalscale}
  Consider a system of linear equations $Ax = b$, where $ A $ has been scaled such that its eigenvalues lie in the interval $[\frac{1}{\kappa}, 1]$. Denote by $ |b\rangle $ the normalized quantum state proportional to $ b $. Assume access to an oracle that provides access to the elements of a sparse submatrix of $ A $. Then, there exists a quantum algorithm that takes the input state $ |b\rangle $ and outputs the normalized solution state $ |\tilde{x}\rangle $ with additive error $ || \tilde{x} - x || $ less than $ \epsilon $. The algorithm requires $ T $ $T$-gates and $ Q $ queries to the oracle, where:
  \begin{align}
    T &\lesssim  \frac{\sqrt{\frac{320}{3}} \pi \kappa^2 s}{\epsilon^2} \times (18 \log N + 90r + 15)~, \\
    Q &\lesssim  \frac{\sqrt{\frac{320}{3}} \pi \kappa^2 s}{\epsilon^2} \times 2~, \label{eq:logicalquery}
  \end{align}
  where $ N $ is the matrix size, $ \kappa $ is the condition number, $ s $ is the sparsity of $ A $, and $ \epsilon $ is the precision.
\end{statement}
The statement could be used as a reasonable, saturated resource estimation, and is semi-rigorous based on a combination of mathematical proofs and numerical experiments, which has been discussed in detail in Appendix. Compared with the original HHL algorithm with $O \left(\log(N) \cdot s^2 \kappa^2 / \epsilon\right)$, polynomial scalings on $s,\kappa,1/\epsilon$ has been updated for the convenience of exact pre-factor derivation. Thus, Statement \ref{state:logicalscale} examines the operation count overhead associated with the algorithm's execution independent of hardware parameters, which has been turned into fault-tolerant computing and energy resources explained in Section \ref{sec:methods}. 

In the classical side we have a similar statement evaluated based on the number of floating-point-operations (FLOPs). The detailed derivation of Statement \ref{statement:classical} we refer readers to Appendix.

\begin{statement}\label{statement:classical}
    For the CG method solving the linear system $Ax = b$, the algorithm runs in $C$ FLOPs, where:
  \begin{align}
    C & \lesssim  (4Ns + 14N) \times \left( \frac{1}{2} \kappa \log\left(\frac{2}{\epsilon}\right) \right)~.
  \end{align}
  Here, $N$ is the matrix size, $\kappa$ is the condition number, $s$ is the sparsity, and $\epsilon$ is the precision.  
\end{statement}
Finally, we consider hardware parameters to illustrate the specific runtime overhead and its relationship with various input parameters. The superconducting quantum device is assumed to have physical qubit error rate of $10^{-5}$, a logical cycle execution time of 10 ns, and an energy consumption of 6.25 Watts per physical qubit. For the classical side, we assume 50 Watts per GHz has been costed with 1 GHz frequency. With error correction and magic state distillation cycles described in Section \ref{sec:methods}, We present heat maps with different $N, s, \kappa,\epsilon$ in Figure \ref{fig:big_figure} to compare the difference between quantum and classical computing. 

In these heat maps, blue regions indicate a ratio $\frac{\text{Classical Overhead}}{\text{Quantum Overhead}} \ge 1$, implying that the quantum algorithm incurs lower overhead than its classical counterpart. Consequently, these regions can be interpreted as exhibiting possible quantum advantage. Also, In Figure \ref{fig:big_figure}, we depict how runtime, space, and energy costs vary with matrix size $N$, where we assume certain conditions for $s,\kappa,\epsilon$. Note that $N,s,\kappa,\epsilon$ are all changable in our program, and the assumption may not perfectly align with future quantum computers. For readers interested in different configurations, the runtime estimation and energy consumption can be easily rescaled based on the ratio of the new parameters for clock cycle time and energy consumption per qubit. 

As a summary, we have the following primary conclusions:
\begin{itemize}
\item From Figure \ref{fig:big_figure} (a,b), we see a precise boundary in the space of key algorithm parameters $N,\kappa,s,1/\epsilon$ between $\frac{\text{Classical Overhead}}{\text{Quantum Overhead}} \le 1$ and $\frac{\text{Classical Overhead}}{\text{Quantum Overhead}} \ge 1$, indicating a boundary of potential, practical quantum advantage. 
\item Based on the crossing point of Figure (c,d), we see that possible quantum advantages could emerge around the matrix size $N \approx 2^{33} \sim 2^{48}$ or even lower, requiring ${O}(10^5)$ physical qubits, ${O}(10^{12} \sim 10^{13})$ Joules of energy, and approximately 10 days of computational time (${O}(10^{6})$ seconds). Note that the quantum energy advantage might arrive later than the computational advantage, since current superconducting quantum devices are relatively costly compared to classical ones in terms of energy due to their dilution refrigerator cooling systems. More advanced cryogenic technologies and better quantum device designs based on superconducting hardware or beyond might be more energy-efficient \cite{jaschke2023quantum}.
\item Polynomial factor about $s,\kappa,\epsilon$ in practical setups might matter and change the boundary of practical quantum advantage. For instance, since in our version of the algorithm, the dependence on $s$ is light both in classical and quantum algorithms, so the boundary in the $s$ direction might be flatten. However, practical quantum advantage might always emerge for large enough $N$ due to the $\log N$ scaling.  
\end{itemize}

\begin{figure*}
    \centering
    \begin{subfigure}{0.94\textwidth}
        \centering
        \includegraphics[width=\textwidth]{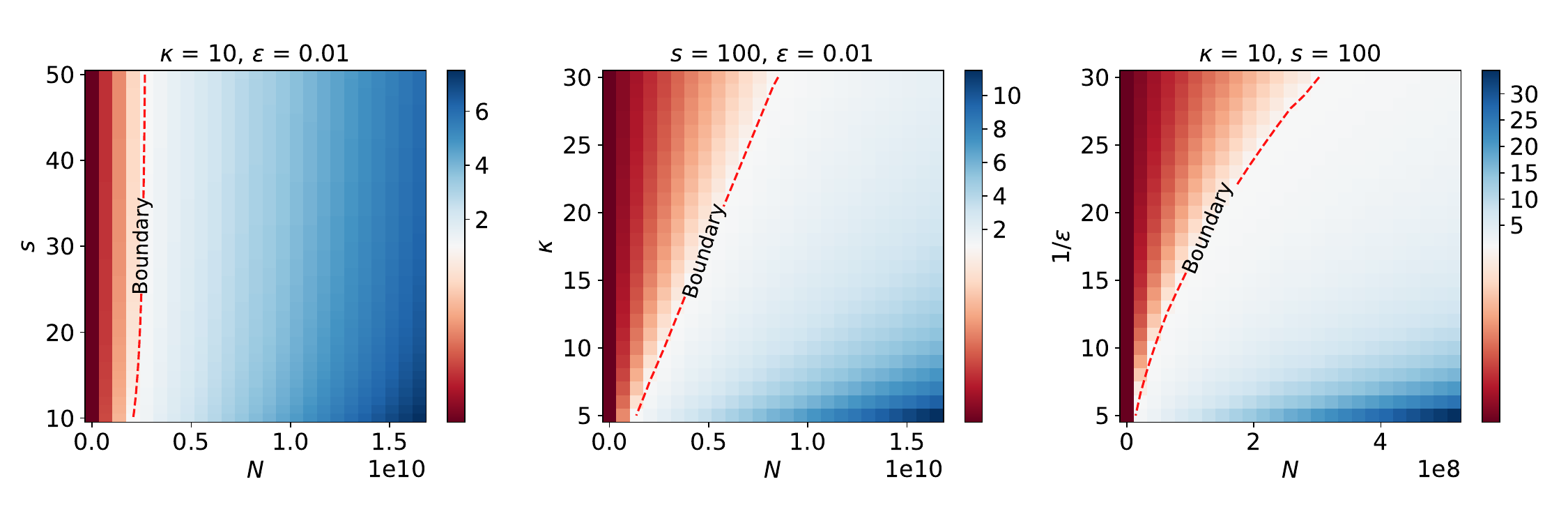}
        \caption{Runtime Ratio: $\frac{\text{Classical Overhead}}{\text{Quantum Overhead}}$, the blue region corresponds to possible quantum advantage.}
    \end{subfigure}
    \begin{subfigure}{0.94\textwidth}
        \centering
        \includegraphics[width=\textwidth]{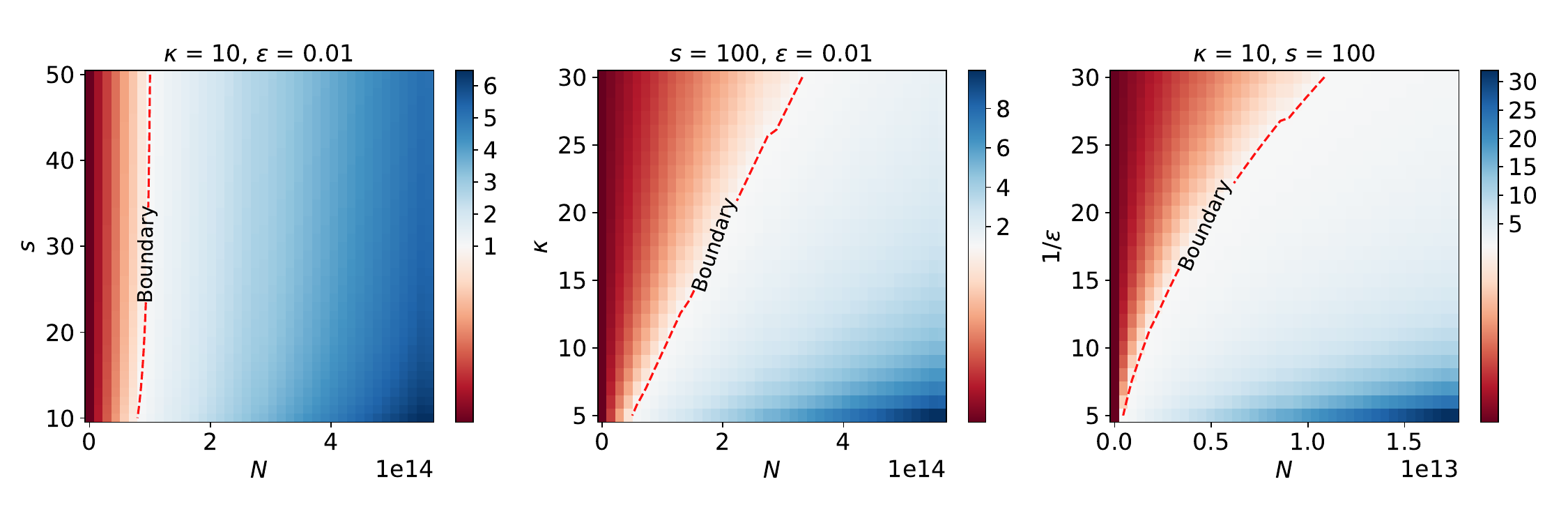}
        \caption{Energy Ratio, $\frac{\text{Classical Overhead}}{\text{Quantum Overhead}}$, the blue region corresponds to possible quantum advantage.}
    \end{subfigure}
    \begin{subfigure}{0.3\textwidth}
        \centering
        \includegraphics[width=\textwidth]{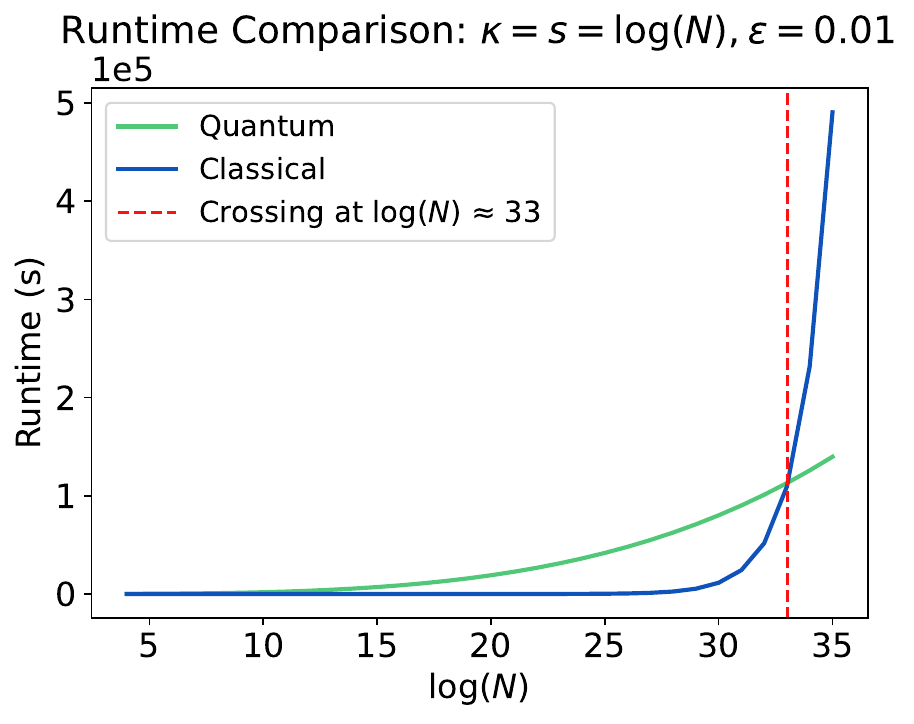}
        \caption{Runtime Comparison.}
    \end{subfigure}
    \begin{subfigure}{0.3\textwidth}
        \centering
        \includegraphics[width=\textwidth]{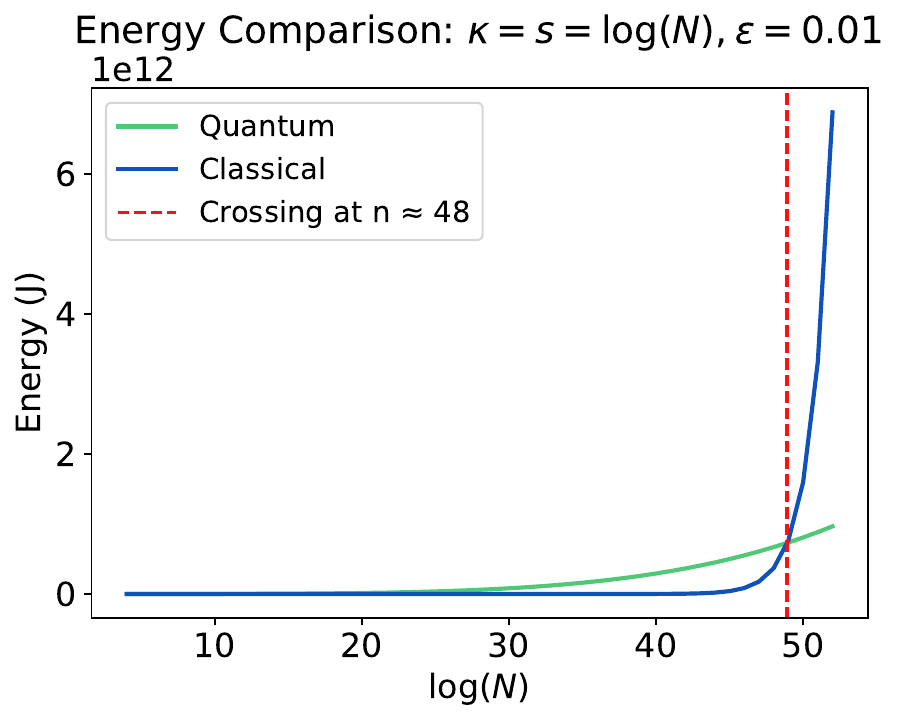}
        \caption{Energy Comparison.}
    \end{subfigure}
    \begin{subfigure}{0.3\textwidth}
        \centering
        \includegraphics[width=\textwidth]{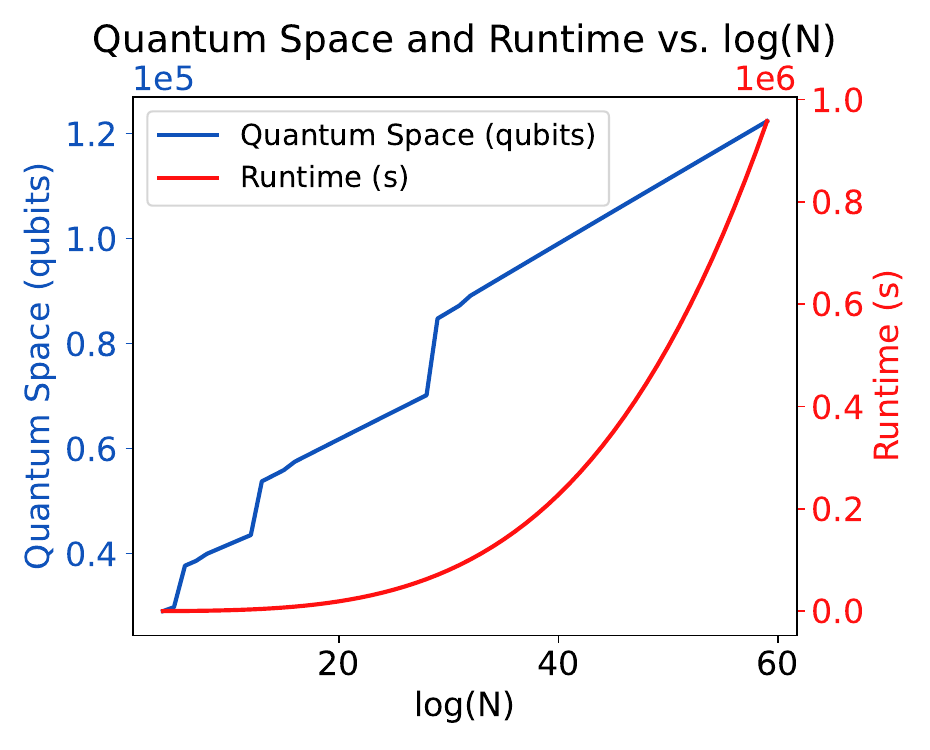}
        \caption{Runtime and Space Cost.}
    \end{subfigure}

    \caption{
    \textbf{Comparison of Quantum Linear System Algorithm (QLSA) and Classical Conjugate Gradient Algorithm.} In the plots, $N$ represents matrix size, $s$ is the sparsity, $\kappa$ is the condition number, $\epsilon$ is the precision tolerance in terms of vector-2 norm.
    (a) \& (b): Heatmaps showing the ratio of classical runtime to quantum runtime and classical energy consumption to quantum energy consumption under different problem parameters. Blue regions indicate a possible quantum advantage.
    (c) \& (d): Runtime and energy comparison between quantum and classical algorithms, assuming condition number and sparsity scale as $\log (N)$, and $\epsilon$ to be 0.01.
    (e): Runtime and qubit count scaling for the quantum algorithm over larger matrix sizes.
    }
    \label{fig:big_figure}
\end{figure*}

\section{Methods}\label{sec:methods}
The quantum resource estimation primarly contain logical resource analysis and physical resource analysis.

At the logical level, we first implemented the quantum circuit for the Quantum Linear Systems Algorithm (QLSA) using Q\#, a Microsoft programming language designed for circuit-based quantum computing \cite{qsharp}. This implementation allowed us to estimate the logical qubit and $T$-gate requirements for each submodule. Among all submodules in the QLSA algorithm, the primary bottleneck lies in one specific subroutine—the one-sparse Hamiltonian simulation (HS1). This subroutine is responsible for constructing a unitary operation that implements $e^{iAt}$, where $A$ is the input matrix. The high resource demand of HS1 stems from the Trotterization process, which requires applying HS1 repeatedly to suppress the error to a desired level. Additionally, the unitary operation itself must be executed multiple times as part of the Quantum Phase Estimation (QPE) procedure, further increasing the total number of HS1 invocations. To precisely determine the scaling behavior of HS1, it is essential to analyze the algorithm's error. There are two primary sources of error in the QLSA algorithm: (1) errors introduced by Trotterization and (2) errors from the Quantum Fourier Transform (QFT). While previous analyses of QLSA have not explicitly addressed Trotterization, our work provides the first comprehensive error analysis that accounts for both error sources simultaneously. Through mathematical derivation, we obtained an exact upper bound for the total error and found that this bound is significantly higher than the average-case error. To refine this analysis, we conducted numerical experiments to give a probabilistic bound, revealing that the expected cost of the HHL algorithm is substantially lower than its worst-case estimate. These refinements led to the derivation of Statement \ref{state:logicalscale}. Based on our error estimation, we determined the required number of HS1 repetitions for specific input cases. The gate count per HS1 operation was obtained from our code implementation. Since $T$-gates represent the primary resource bottleneck in our analysis, we focus on tracking only the $T$-gate count, although other gates are also estimated in Appendix. The total number of $T$-gates required is given by the product of the total number of HS1 invocations and the $T$-gate count per HS1 operation, which gives us the result as shown in Statement \ref{state:logicalscale}.

After determining the number of logical qubits and logical $T$-gates required by the algorithm, the next step is to translate these logical resources into the fault-tolerant physical resources under a quantum error correction (QEC) scheme. Among the various existing QEC schemes, we adopt the surface code in this study as it has the highest fault-tolerant error threshold and has been demonstrated by recent experiments \cite{acharya2024quantumerrorcorrectionsurface}. In essence, when executing quantum algorithms using the surface code model, physical qubits are categorized into data blocks and distillation blocks. Data blocks are responsible for storing logical qubits and consuming the so-called \emph{magic states}, while distillation blocks generate these magic states. Each time a data block consumes a magic state produced by a distillation block, a $T$-gate operation is executed. Consequently, the total physical qubit count is simply the sum of physical qubits in the data and distillation blocks. The runtime of surface code-based quantum computing is approximately given by the number of $T$-gates divided by their execution speed, which is determined by the specific configuration of data and distillation blocks as well as the hardware parameters of the quantum computer. To determine the physical qubit count and runtime, it is essential to establish an optimal configuration of data and distillation blocks. Our goal is to find a configuration that minimizes the space-time volume, defined as the product of the number of physical qubits and the runtime. Achieving this requires considering several constraints and trade-offs, which are listed as follows: 1). The distillation block must be large enough to achieve the desired $T$-state error rate. 2). The code distance of the surface code must be large enough to achieve the desired logical error rate. 3). The data block's efficiency to consume magic states comes at a price of using more physicla qubits. 4). The distillation block's efficiency to produce magic states comes at a price of using more physical qubits.

Moreover, we use the procedure proposed in \cite{litinski2019game} to optimize space-time volume under such considerations. The procedure includes steps: 1). Determine the distillation protocol. We select the most cost-effective distillation protocol that achieves the required $T$-state error rate. 2). Construct a minimal setup. We design the space-optimal protocol. 3). Determine the code distance. We identify the minimum code distance that ensures the error rate remains below the threshold. 4). Add distillation blocks. Building on the space-optimal protocol, we incrementally adopt data block and distillation block protocols with larger space overhead but reduced time overhead until the optimal space-time cost is achieved.

We implemented these four steps in a resource simulator, which will be further discussed in the Appendix. By inputting the algorithm's logical $T$-gate count, logical qubit count, and reasonably assumed quantum computer parameters (including quantum gate fidelity, the number of physical qubits, and clock frequency), the estimator identifies the configuration of distillation and data blocks that minimizes the space-time volume. It then outputs the corresponding runtime and physical qubit overhead. We are now left with energy estimation. We use the model proposed in \cite{martin2021energyusequantumdata}, where the primary energy cost of quantum computers is cooling, and the energy consumption grows with the number of qubits because a single refrigeration unit can accommodate only a limited number of superconducting qubits. As a result, the quantum computer's power consumption is proportional to the number of physical qubits. We define energy efficiency as the power consumption per qubit and use the data provided by \cite{parker2023estimating}, which targets IBM Quantum System Two's dilution refrigerator. This system houses 4158 physical qubits and operates at a power of 27 kW, resulting in a power efficiency of 6.25 Watts per qubit. These results provide an intuitive and detailed reference for end-to-end runtime estimation of the resources required to run the QLSA algorithm and serve as a practical basis for developing QLSA-related applications. However, we emphasize that current quantum computers are far from being capable of running the HHL algorithm, and future quantum hardware may differ significantly from existing systems. Additionally, there has been significant progress in the QLSA algorithm, meaning that the algorithm we are estimating may not be optimal. Therefore, the specific results we provide should be viewed as rough baseline predictions based on the current state of quantum computing technology.

In the classical part, we mainly address the conjugate gradient (CG) method, while an alternative method, the Cholesky decomposition (CD) method, is discussed in the Appendix. We primarily focus on the former, as CG is generally considered the classical counterpart of QLSA due to its iterative nature and ability to handle sparse and well-conditioned systems. On the other hand, the CD method is not a direct counterpart to QLSA, as it is deterministic and does not depend on $\epsilon$. However, CD is more commonly used in supercomputers, and we include it in the Appendix, where all resource costs are explicitly derived.

\section{Conclusion and outlook}\label{sec:concoutlook}
In this work, we perform an end-to-end resource estimation of the HHL algorithm in terms of time, space, and energy. Unlike existing resource estimations, including \cite{jennings2023efficient}, our work provides a detailed analysis of fault tolerance based on surface codes. Moreover, we precisely address energy efficiency by employing physical models of superconducting quantum devices. Our findings confirm that quantum computing will indeed have an energy advantage at a large scale based on current superconducting quantum technologies, resolving a long-standing concern about the energy efficiency of quantum computing in practical applications \cite{auffeves2022quantum,scholten2024assessing}. Other innovative aspects of our work include a detailed comparison between quantum and classical approaches to solving linear systems, and an analysis of theoretical upper bounds versus practical average cases, the details of which are summarized in the Appendix.

Although we confirm that potential quantum advantages might appear at the scale of $2^{33} \sim 2^{48}$ matrix sizes and $10^5$ physical qubits, the practical costs could be lower due to the overestimation of resource counts in the Trotter simulation. Moreover, smarter designs for quantum error correction and energy-sustainable innovations in quantum hardware might further enhance performance. On the other hand, compared to other resource estimations of quantum algorithms in different domains (such as cryptography used for digital signatures and blockchains \cite{litinski2023compute}), the HHL algorithm might achieve potential quantum advantages with a smaller number of physical qubits \cite{litinski2023compute}. However, better classical and quantum designs including memories, might further push forward or backward the boundary of practical quantum advantage \cite{gouzien2021factoring}.

Our work presents significant opportunities for bringing quantum computing into practice. For example, we do not address the uploading and downloading challenges of the HHL algorithm, which may be critical for data-intensive applications. Active developments are underway on both the hardware side \cite{hann2019hardware} and the system design side \cite{weiss2024faulty,wang2024fundamental} toward large-scale and error-resilient QRAMs \cite{hann2021resilience} in promising hardware platforms such as circuit/cavity QED/QAD \cite{hann2019hardware}. Notably, since QRAM circuits are classically simulatable, precise estimates of QRAM costs can be investigated on a large scale \cite{hann2021resilience}. Moreover, for the downloading process, a detailed classification of quantum computing user needs may be essential. If quantum customers are classical, what will classical customers require? The first few elements of the state $\ket{x}$? Some expectation value of operators \cite{huang2020predicting}? Other important directions include further exploration of energy efficiency. Can we provide energy estimations for other hardware, such as neutral-atom quantum computers? Can we make device-independent statements about energy advantages? Can we generalize such estimations to a broader range of quantum algorithms? Finally, it would also be interesting to investigate more advanced versions of quantum algorithms \cite{ambainis2010variable,Childs_2017,low2024QLAOptimal,costa2021optimal,dalzell2024shortcut} and classical algorithms \cite{hestenes1952methods}, as well as improved designs for quantum error correction codes \cite{bravyi2024high,xu2024constant,viszlai2023matching} and quantum memories \cite{gouzien2021factoring}.



\textit{Acknowledgements.---}We thank Fred Chong, Jens Eisert, Tranvis Humble, and Liang Jiang for useful discussions. YT, JC and JL are supported in part by the University of Pittsburgh, School of Computing and Information, Department of Computer Science, Pitt Cyber, and the PQI Community Collaboration Awards. MM and TT are supported by Defense Advanced Research Projects Agency as part of the project W911NF2010022, {\em The Quantum Computing Revolution and Optimization: Challenges and Opportunities}. 

\bibliographystyle{unsrt}
\bibliography{bibliography.bib}

\clearpage

\pagebreak

\onecolumngrid
\appendix

\vspace{0.5in}

\begin{center}
	{\Large \bf Appendix}
\end{center}

\section{QLSA algorithm implementation}\label{sec:QLSAImp}

\subsection{A brief introduction to QLSA}\label{sec:IntroQLSA}

The Quantum Linear Systems Algorithm (QLSA) is a quantum algorithm designed to solve systems of linear equations efficiently. It relies on principles of quantum mechanics to provide a possible superpolynomial advantage for specific cases compared to classical methods.

The core idea is to represent the solution to a linear system $A x = {b}$ in the quantum state $|x\rangle$, where $A$ is a sparse and well-conditioned matrix. The algorithm involves several key steps:

\begin{itemize}
    \item \textbf{Preparation of the Input State:} The algorithm starts with the quantum encoding of the input vector $|b\rangle$ into a quantum state.
    \item \textbf{Quantum Phase Estimation (QPE):} QPE approximate the eigenvalues of the matrix $A$ by estimating the phase of the unitary $e^{iAt}$. Such unitary is implemented using the Trotter decomposition method.
    \item \textbf{Matrix Inversion:} The eigenvalues of $A$ are inverted conditionally, enabling the construction of the quantum state corresponding to the solution.
    \item \textbf{Measurement:} Finally, measurements are performed to extract information about the solution vector.
\end{itemize}

QLSA's efficiency is highly dependent on the sparsity and condition number of the matrix $A$, making it particularly useful for problems where these constraints are satisfied

\subsection{Resource bottleneck in QLSA}\label{sec:QLSABottleneck}

A significant computational bottleneck in the QLSA arises from the repetitive application of one-sparse Hamiltonian simulation, which is a fundamental component of the algorithm. This challenge emerges due to the following reasons:

\begin{itemize}
    \item \textbf{Quantum Phase Estimation (QPE):} The QPE step requires the Hamiltonian simulation of $A$, which is an $s$-sparse matrix, to be applied multiple times to achieve sufficient precision. The repetition ensures the accurate estimation of eigenvalues, which directly affects the correctness of the algorithm.
    \item \textbf{Simulation of $s$-Sparse Hamiltonians:} To implement the simulation of $s$-sparse Hamiltonians, the Trotter decomposition method is often employed. However, this approach introduces further overhead, as it decomposes the $s$-sparse Hamiltonian into a series of one-sparse Hamiltonians.
    \item \textbf{One-Sparse Hamiltonian Simulation:} Each $s$-sparse Hamiltonian simulation requires the repetitive application of one-sparse Hamiltonian simulations. This nested structure compounds the computational cost, making one-sparse Hamiltonian simulation a critical bottleneck for resource efficiency.
\end{itemize}

In essence, the hierarchical nature of the QLSA—where QPE requires multiple iterations of $s$-sparse simulations, and $s$-sparse Hamiltonian simulations depend on repeatedly applying a series of one-sparse simulations—highlights the algorithm's computational resource bottleneck.

Therefore, our resource estimation focuses solely on the one-sparse Hamiltonian simulations, reducing the overall resource estimation to (1) the number of one-sparse Hamiltonian simulations. (2) the resource cost of each one-sparse Hamiltonian simulation. Next we will discuss the resource cost of each one-sparse Hamiltonian simulation by giving
a concrete code implementation of such step. We left the analysis for number of one-sparse Hamiltonian simulation in the error analysis section.

\subsection{One-sparse Hamiltonian simulation implementation}\label{sec:QLSACircuit}
In general, directly implementing the unitary $e^{iHt}$ is challenging because $H$ is an extremely large matrix, and computing its exponential is highly complex. Here we use the methods in \cite{Berry_2006}. The core idea for implementing $e^{iHt}$ relies on the \textit{unitary conjugation theorem}, which states that for a matrix exponential of the form $e^{iUHU^\dagger}$, it is equivalent to $U e^{iH} U^\dagger$.

Thus, if we can decompose $H$ into the form $U_1 U_2 \dots H' \dots U_2^\dagger U_1^\dagger$, where each $U_i$ and $e^{iH't}$ can be efficiently implemented, we effectively achieve the implementation of $e^{iHt}$.

In the following, we first explain the decomposition process and then discuss the implementation of the resulting unitary operations.

\begin{lemma}\label{lemma:Hdecomp}
    For a one-sparse Hermitian matrix H with real valued entries, it can be decomposed into the form (we ignore the tensor with identity matrix):
    \begin{align}
        H & = M (T \otimes F) M^\dagger \label{eq:Hdecomp}
    \end{align}
    where $M$ is the oracle, T is the swap operator on two registers, and F is diagonal operator, specifically:
    \begin{align}
        M|x\rangle |0\rangle |0\rangle |0\rangle & = |x\rangle |m(x)\rangle |0\rangle |w(x)\rangle \\
        T|x\rangle |y\rangle                     & = |y\rangle |x\rangle                           \\
        F|x\rangle                               & = x|x\rangle
    \end{align}
\end{lemma}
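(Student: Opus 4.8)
The plan is to establish the decomposition in Equation~\eqref{eq:Hdecomp} by analyzing how a one-sparse Hermitian matrix $H$ acts on the computational basis. Since $H$ is one-sparse, each row has at most one nonzero entry; the oracle $M$ is precisely the device that, given index $|x\rangle$, computes the column index $m(x)$ of the nonzero entry in row $x$ together with its associated magnitude information stored in the ancilla register $|w(x)\rangle$. The key observation is that a one-sparse Hermitian matrix defines an involution-like pairing on indices: because $H$ is Hermitian, $H_{x,m(x)} = \overline{H_{m(x),x}}$, which forces $m(m(x)) = x$, so the nonzero structure partitions the basis into pairs $\{x, m(x)\}$ (and fixed points where $m(x)=x$). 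I would first verify this pairing property, since it is what guarantees that $M$ is a well-defined reversible (unitary) oracle and that the swap operator $T$ correctly implements the off-diagonal coupling.

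First I would write out the right-hand side $M(T\otimes F)M^\dagger$ acting on a basis state $|x\rangle|0\rangle|0\rangle|0\rangle$ and verify it reproduces the action of $H$. Applying $M^\dagger$ is trivial here since we start from the image of $M$; the substantive step is to track how $M$ maps $|x\rangle$ into $|x\rangle|m(x)\rangle|0\rangle|w(x)\rangle$, then apply $T$ to swap the $|x\rangle$ and $|m(x)\rangle$ registers, then apply $F$ to extract the eigenvalue-magnitude factor $x$ (encoding the matrix element), and finally apply $M^\dagger$ to uncompute the ancillas. The net effect should send $|x\rangle \mapsto H_{m(x),x}|m(x)\rangle$, matching the single nonzero entry of column $x$. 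I would handle the real-valued assumption explicitly: for real entries the phase information is absent, so $F$ need only reproduce the real scalar magnitude, which is why $F$ can be taken diagonal with $F|x\rangle = x|x\rangle$ acting on the register that stores the matrix value.

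The main obstacle I anticipate is bookkeeping the registers correctly so that $M$ is genuinely unitary and $M^\dagger$ uncomputes cleanly. In particular, one must confirm that the ancilla $|w(x)\rangle$ carrying the matrix element is consistently defined across the paired indices $x$ and $m(x)$, so that after the swap $T$ the subsequent $M^\dagger$ still finds the correct data to reverse. This consistency is exactly where Hermiticity and the pairing $m(m(x))=x$ are essential: without it, $T$ would move the registers into a configuration that $M^\dagger$ cannot uncompute, breaking unitarity. Once the pairing is established, the verification reduces to a direct basis-state computation, so I expect the conceptual difficulty to lie entirely in setting up the oracle bookkeeping rather than in any hard estimate.
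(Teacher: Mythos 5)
Your proposal is correct and follows essentially the same route as the paper's proof: a direct basis-state verification tracking $\ket{x}\ket{0}\ket{0}\ket{0}$ through $M$, then $T$, then $F$ (acting on the register holding the matrix value), then $M^\dagger$, yielding the net effect $\ket{x}\mapsto w(x)\ket{m(x)}$, which matches $H$. Your explicit derivation of the pairing $m(m(x))=x$ and the consistency $w(m(x))=w(x)$ from Hermiticity with real entries is used only implicitly by the paper in the uncomputation step, so it sharpens rather than changes the argument.
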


\begin{proof}
    Suppose the x's column of $H$ has the non-zero element $w(x)$ at position $m(x)$, then we have:
    \begin{align}
        H\ket{x} = w(x)\ket{m(x)}.
    \end{align}

    Thus to construct the operator $H$, is equivalent to mimic such behavior. We can achieve this by the following steps:

    \begin{enumerate}
        \item Apply $M$ to $\ket{x}$ and the ancilla:
              \begin{align}
                  \ket{x}\ket{0}\ket{0}\ket{0} \xrightarrow{M} \ket{x}\ket{m(x)}\ket{0}\ket{w(x)}.
              \end{align}

        \item Apply $T = A^\dagger Z A$:
              \begin{align}
                  \xrightarrow{T} \ket{m(x)}\ket{x}\ket{0}\ket{w(x)}.
              \end{align}

        \item Apply $F$:
              \begin{align}
                  \xrightarrow{F} w(x)\ket{m(x)}\ket{x}\ket{0}\ket{w(x)}.
              \end{align}

        \item Apply $M^\dagger$:
              \begin{align}
                  \xrightarrow{M^\dagger} w(x)\ket{m(x)}\ket{x \oplus m(m(x))}\ket{0}\ket{w(x) \oplus w(m(x))}
                  = w(x)\ket{m(x)}\ket{0}\ket{0}\ket{0}.
              \end{align}
    \end{enumerate}
    Thus $H$ is equivalent to $M (T \otimes F) M^\dagger$.
\end{proof}

Now we are left to implement $e^{iTt}$ and $e^{iFt}$. Later, we will see that $e^{iFt}$ can be directly implemented. For $e^{iTt}$, we have to apply unitary conjugation again:

\begin{lemma}\label{lemma:Tdcomp}
    For a swap operator $T$, it can be decomposed into the form:
    \begin{align}
        T = \bigotimes_{l=1}^{n} W^{l, n+l} \bigotimes_{l=1}^{n} E^{l, n+l} \bigotimes_{l=1}^{n} W^{l, n+l} = \tilde{W}\tilde{E}\tilde{W}.
    \end{align}
    where $W$ and $E$ are two qubits operators with the following matrix representation:
    \begin{align}
        W = \begin{bmatrix}\label{eq:Wmat}
            1 & 0                  & 0                   & 0 \\
            0 & \frac{1}{\sqrt{2}} & \frac{1}{\sqrt{2}}  & 0 \\
            0 & \frac{1}{\sqrt{2}} & -\frac{1}{\sqrt{2}} & 0 \\
            0 & 0                  & 0                   & 1
        \end{bmatrix}
        \quad \quad
        E = \begin{bmatrix}
            1 & 0 & 0  & 0 \\
            0 & 1 & 0  & 0 \\
            0 & 0 & -1 & 0 \\
            0 & 0 & 0  & 1
        \end{bmatrix}.
    \end{align}
\end{lemma}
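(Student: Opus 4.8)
The plan is to verify the claimed decomposition $T = \tilde{W}\tilde{E}\tilde{W}$ by a direct structural argument on how the two-qubit operators $W$ and $E$ act, rather than by multiplying out the full $2^{2n} \times 2^{2n}$ matrices. First I would recall that the swap operator $T$ acts on two $n$-qubit registers by exchanging them, which is equivalent to performing $n$ independent qubit-level swaps, pairing qubit $l$ of the first register with qubit $n+l$ of the second register for each $l = 1,\dots,n$. Because these $n$ elementary swaps act on disjoint pairs of qubits, they commute and the full swap factorizes as a tensor product $T = \bigotimes_{l=1}^{n} \mathrm{SWAP}^{l,n+l}$. This reduces the entire lemma to a single two-qubit identity: it suffices to show that the elementary two-qubit swap gate equals $W E W$, since the tensor-product structure then propagates the identity to all $n$ pairs, and the three tensor products $\tilde W, \tilde E, \tilde W$ can be regrouped exactly as written because operators on disjoint qubit pairs commute.

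The heart of the proof is therefore the $4\times 4$ computation showing $\mathrm{SWAP} = W E W$. Here I would simply multiply the two explicit matrices in Eq.~\eqref{eq:Wmat}. Since $W$ and $E$ act nontrivially only on the two-dimensional middle block spanned by $|01\rangle$ and $|10\rangle$ (both fix $|00\rangle$ and $|11\rangle$), the computation collapses to checking that on this middle block
\begin{align}
  \frac{1}{\sqrt{2}}\begin{bmatrix} 1 & 1 \\ 1 & -1 \end{bmatrix}
  \begin{bmatrix} 1 & 0 \\ 0 & -1 \end{bmatrix}
  \frac{1}{\sqrt{2}}\begin{bmatrix} 1 & 1 \\ 1 & -1 \end{bmatrix}
  = \begin{bmatrix} 0 & 1 \\ 1 & 0 \end{bmatrix},
\end{align}
which is the statement that conjugating the Pauli $Z$ by a Hadamard yields the Pauli $X$ (the swap's action on this block). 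On the one-dimensional blocks $|00\rangle$ and $|11\rangle$ all three factors act as the identity, so the product is the identity there, matching the swap. This confirms the two-qubit base case.

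Having established both the factorization of $T$ into disjoint elementary swaps and the two-qubit identity $\mathrm{SWAP} = WEW$, I would then assemble the result: substituting $\mathrm{SWAP}^{l,n+l} = W^{l,n+l} E^{l,n+l} W^{l,n+l}$ into the tensor product and using disjointness to freely reorder the factors gives precisely
\begin{align}
  T = \bigotimes_{l=1}^{n} W^{l,n+l} \bigotimes_{l=1}^{n} E^{l,n+l} \bigotimes_{l=1}^{n} W^{l,n+l},
\end{align}
as claimed. I do not anticipate a genuine obstacle here, as the argument is essentially bookkeeping; the only point requiring care is justifying the regrouping of the interleaved product $\prod_l (W E W)^{l,n+l}$ into the three separated tensor blocks. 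This step is clean precisely because each factor acts on a distinct qubit pair and hence commutes with factors on other pairs, so the main task is to state this commutativity explicitly rather than to overcome any real difficulty.
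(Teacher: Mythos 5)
Your proposal is correct and follows essentially the same route as the paper's proof: factor $T$ into $n$ disjoint pairwise swaps $T = \bigotimes_{l=1}^{n} S^{l,n+l}$, then diagonalize the two-qubit swap as $S = WEW$ (your block reduction to $HZH = X$ is exactly this diagonalization). You merely make explicit the $4\times 4$ verification and the regrouping of the interleaved tensor factors, both of which the paper's one-line proof leaves implicit.
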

\begin{proof}
    Notice that $T$ is equivalent to the swap operator on $n$ pairs of qubits, i.e., $T = \bigotimes_{l=1}^{n} S^{l, n+l}$.
    By diagonalizing $S$, we obtain $WEW$, which completes the proof.
\end{proof}

Again the operator $\tilde{E}$ has to be decomposed, notice that the behavior of $\bigotimes_{l=1}^{n} E^{l, n+l}$ is just compute the parity of $x_iy_i$, and add a phase 1 for
even parity, and -1 for odd parity. This can be implemented by using $n$ Toffli-like gates to compute the parity and store in an
ancilla qubit, and then apply $Z$-gate to the ancilla. This gives us:

\begin{lemma}\label{lemma:tildeEdecomp}
    For $\tilde{E} = \bigotimes_{l=1}^{n} E^{l, n+l}$, it can be decomposed into the form:
    \begin{align}
        \tilde{E} = \tilde{T_f} Z \tilde{T}_f^\dagger
    \end{align}
    where $\tilde{T_f}$ are set of Toffli-like gates that is used to compute the parity, $Z$ is $Z$-gate on parity qubit.
\end{lemma}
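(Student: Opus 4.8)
The plan is to recognize that $\tilde E$ is a diagonal \emph{phase} operator whose eigenvalue on each computational basis state is $\pm 1$ and governed by a single parity bit, and then to realize that parity-controlled sign by the standard compute--kickback--uncompute pattern using one ancilla qubit.

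First I would pin down the diagonal entries of $\tilde E$. Writing a basis state of the two registers as $\ket{x}\ket{y}$ with $x = x_1\cdots x_n$ and $y = y_1\cdots y_n$, and reading off from the matrix of $E$ in Lemma~\ref{lemma:Tdcomp} that $E$ multiplies a pair by $-1$ exactly on the pattern $\ket{1}\ket{0}$, the tensor product $\tilde E = \bigotimes_{l=1}^n E^{l,n+l}$ acts as
\begin{align}
  \tilde E \,\ket{x}\ket{y} = (-1)^{p(x,y)}\,\ket{x}\ket{y}, \qquad p(x,y) = \bigoplus_{l=1}^{n} g(x_l,y_l),
\end{align}
where $g(x_l,y_l)$ is the indicator of the $\ket{1}\ket{0}$ pattern on the $l$-th pair. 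Thus $\tilde E$ is merely a global sign determined by the mod-2 parity $p(x,y)$, which reduces the lemma to implementing this single parity-phase.

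Next I would introduce one ancilla qubit initialized to $\ket{0}$ and define $\tilde T_f$ as the product of $n$ Toffoli-like gates, the $l$-th controlled on qubits $l$ and $n+l$ (with the control polarities chosen to fire on the $\ket{1}\ket{0}$ pattern) and targeting the ancilla, so that
\begin{align}
  \tilde T_f\,\ket{x}\ket{y}\ket{0} = \ket{x}\ket{y}\ket{p(x,y)}.
\end{align}
Because these $n$ gates share a single target and have computational-basis controls, they pairwise commute and each squares to the identity, so $\tilde T_f$ is self-inverse, $\tilde T_f = \tilde T_f^\dagger$; this is precisely what makes the ordering in $\tilde T_f Z \tilde T_f^\dagger$ immaterial. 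Applying $Z$ to the ancilla then produces the kickback $\ket{p}\mapsto(-1)^{p}\ket{p}$, and finally $\tilde T_f^\dagger$ uncomputes the ancilla back to $\ket{0}$, leaving
\begin{align}
  \tilde T_f\, Z\, \tilde T_f^\dagger\,\ket{x}\ket{y}\ket{0} = (-1)^{p(x,y)}\,\ket{x}\ket{y}\ket{0},
\end{align}
which matches the action of $\tilde E$ on the data registers while restoring the ancilla. Comparing this with the first display closes the argument.

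The main obstacle I anticipate is purely in the bookkeeping of the phase pattern: one must read the sign of $E$ off correctly (it falls on $\ket{1}\ket{0}$, not $\ket{1}\ket{1}$, so the Toffoli-like gates require one negated control rather than a plain Toffoli), and then verify that the ancilla is returned \emph{exactly} to $\ket{0}$ with no residual entanglement, so that $\tilde T_f Z \tilde T_f^\dagger$ genuinely equals $\tilde E$ as an operator on the $2n$ data qubits. Once the parity function and control polarities are fixed, the commutativity and involution properties of $\tilde T_f$ together with the $Z$-kickback make the remainder routine.
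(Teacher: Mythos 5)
Your proof is correct and follows essentially the same route as the paper, which justifies the lemma by exactly this compute-parity/$Z$-kickback/uncompute argument with $n$ Toffoli-like gates writing the parity onto an ancilla. You additionally pin down a detail the paper's prose glosses over but its circuit (mixed-polarity controls) confirms: since $E=\mathrm{diag}(1,1,-1,1)$ places the $-1$ on the $\ket{1}\ket{0}$ pattern, each Toffoli-like gate needs one negated control, and your verification that the ancilla is exactly uncomputed makes the operator identity on the $2n$ data qubits rigorous.
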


By Lemma \ref{lemma:Hdecomp}, \ref{lemma:Tdcomp}, and \ref{lemma:tildeEdecomp}, we have the following theorem:
\begin{theorem}\label{thm:main}
    For a one-sparse Hermitian matrix $H$ with real valued entries, $e^{iHt}$ can be implemented by:
    \begin{align}
        e^{iHt} = M \tilde{W} \tilde{T_f} (e^{iZt} \otimes e^{iFt}) \tilde{T}_f^\dagger \tilde{W}^\dagger M^\dagger.
    \end{align}
\end{theorem}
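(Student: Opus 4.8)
The plan is to implement $e^{iHt}$ by repeatedly invoking the \emph{unitary conjugation} identity $e^{i(UGU^\dagger)t}=U\,e^{iGt}\,U^\dagger$, valid for any unitary $U$ and Hermitian $G$ because $(UGU^\dagger)^k=UG^kU^\dagger$ term-by-term in the power series. Each of Lemmas \ref{lemma:Hdecomp}, \ref{lemma:Tdcomp}, and \ref{lemma:tildeEdecomp} expresses one operator as a conjugation of a simpler one, so the strategy is to peel these conjugations off the exponent one layer at a time until only directly implementable exponentials remain at the core.

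First I would feed the decomposition $H=M(T\otimes F)M^\dagger$ of Lemma \ref{lemma:Hdecomp} into the conjugation identity with $U=M$, giving $e^{iHt}=M\,e^{i(T\otimes F)t}\,M^\dagger$. Next, since the swap operator $T$ and the diagonal operator $F$ act on disjoint registers, the factors $\tilde W$ and $\tilde T_f$ coming from Lemmas \ref{lemma:Tdcomp} and \ref{lemma:tildeEdecomp} commute with $F$; combining those two lemmas yields $T=\tilde W\tilde T_f\,Z\,\tilde T_f^\dagger\tilde W$, and tensoring with $F$ on its separate register gives
\begin{align}
T\otimes F=(\tilde W\tilde T_f\otimes I)\,(Z\otimes F)\,(\tilde T_f^\dagger\tilde W\otimes I)~.
\end{align}
Applying the conjugation identity once more, now with $U=\tilde W\tilde T_f$, collapses the exponent to
\begin{align}
e^{i(T\otimes F)t}=\tilde W\tilde T_f\,e^{i(Z\otimes F)t}\,\tilde T_f^\dagger\tilde W~.
\end{align}
Substituting back and using that $\tilde W$ is its own inverse (so $\tilde W=\tilde W^\dagger$, as is clear from the explicit $W$ in Lemma \ref{lemma:Tdcomp}) reproduces the claimed circuit, provided the core factor $e^{i(Z\otimes F)t}$ is identified with the operation written as $e^{iZt}\otimes e^{iFt}$.

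The main obstacle — and the step that deserves the most care — is exactly this last identification. Taken literally as a tensor product of two independent exponentials, $e^{iZt}\otimes e^{iFt}$ is \emph{not} equal to $e^{i(Z\otimes F)t}$: the former has eigenvalues $e^{i(x\pm 1)t}$ whereas the latter has eigenvalues $e^{\pm ixt}$. The notation in the statement must therefore be read as shorthand for the single joint operation in which the parity qubit (the $Z$ eigenvalue $\pm 1$) controls the \emph{sign} of the phase $e^{\pm ixt}$ applied according to the value $x$ held in the $F$ register. I would make this explicit by showing that $Z\otimes F$ has eigenvalues $\pm x$ and that the corresponding controlled phase rotation can be synthesized directly, since the $F$ register already stores $x$ and $e^{iZt}$ sets the conditional sign, so the core exponential is genuinely implementable. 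Once this is pinned down, the remaining equalities are the routine bookkeeping of the conjugation identity, and assembling the factors yields Theorem \ref{thm:main}.
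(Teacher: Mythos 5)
Your proposal is correct and follows essentially the same route as the paper: iterated unitary conjugation through Lemmas \ref{lemma:Hdecomp}, \ref{lemma:Tdcomp}, and \ref{lemma:tildeEdecomp}, using $\tilde{W} = \tilde{W}^\dagger$ to arrive at the stated form. In fact your handling of the core factor is more careful than the paper's own proof, which silently drops and restores the $\otimes F$ tensor factor and writes $e^{iZt} \otimes e^{iFt}$ (or $C_{e^{iZt}} \otimes e^{iFt}$) where the operator actually required is the joint sign-controlled phase $e^{i(Z \otimes F)t}$ --- precisely the abuse of notation you identify and repair via the eigenvalue comparison, and which the paper only resolves implicitly through the controlled-rotation circuit of Figure \ref{fig:eift}.
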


\begin{proof}
    By Lemma \ref{lemma:Hdecomp}, we have:
    \begin{align}
        e^{iHt} = e^{iM (T \otimes F) M^\dagger t} = M e^{i(T \otimes F)t} M^\dagger.
    \end{align}
    By Lemma \ref{lemma:Tdcomp}, we have:
    \begin{align}
        e^{i(T \otimes F)t} = e^{i\tilde{W}\tilde{E}\tilde{W}t} = \tilde{W} e^{i\tilde{E}t} \tilde{W}^\dagger.
    \end{align}
    By Lemma \ref{lemma:tildeEdecomp}, we have:
    \begin{align}
        e^{i\tilde{E}t} = e^{i\tilde{T_f} C_Z \tilde{T}_f^\dagger t} = \tilde{T_f} (C_{e^{iZt}} \otimes e^{iFt}) \tilde{T}_f^\dagger.
    \end{align}
    All together, we have:
    \begin{align}
        e^{iHt} = M \tilde{W} \tilde{T_f} (C_{e^{iZt}} \otimes e^{iFt}) \tilde{T}_f^\dagger \tilde{W} M^\dagger.
    \end{align}
\end{proof}

The corresponding quantum circuit for Theorem \ref{thm:main} is shown in Figure \ref{fig:onehs}.

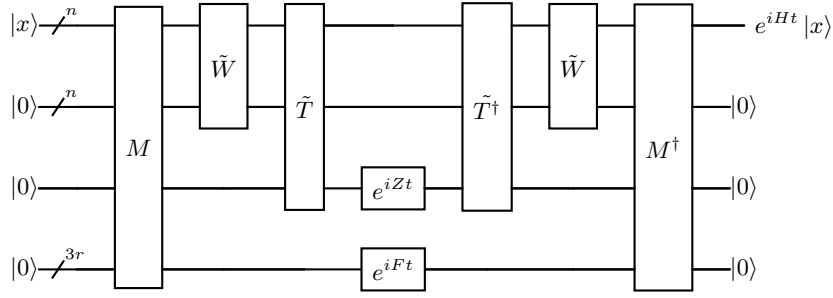
\begin{figure}
    \centering
    \begin{quantikz}[transparent]
        \ket{x} & \qwbundle{n} & \gate[4]{M} & \gate[2]{\tilde{W}} & \gate[3]{\tilde{T}}&\qw            & \gate[3]{\tilde{T^\dagger}}& \gate[2]{\tilde{W}}  & \gate[4]{M^\dagger} & \rstick{$e^{iHt}\ket{x}$} \qw \\
        \ket{0} & \qwbundle{n} & \qw         & \qw                 & \qw                &\qw            & \qw                & \qw                 & \qw & \ket{0} \qw \\
        \ket{0} & \qw          & \qw         & \qw                 & \qw                &\gate{e^{iZt}} & \qw                & \qw                 & \qw & \ket{0} \qw \\
        \ket{0} & \qwbundle{3r}& \qw         & \qw                 & \qw                &\gate{e^{iFt}} & \qw                & \qw                 & \qw & \ket{0} \qw
    \end{quantikz}
    \caption{Quantum circuit for one sparse Hamiltonian simulation.}
    \label{fig:onehs}
\end{figure}

In the quantum circuit \ref{fig:onehs}, the oracle $M$ is taken as assumption and we do not count into resource estimation, $e^{iZt}$ is a single qubit rotation gate. Next we'll discuss the implementation of the operators $\tilde{W}$, $\tilde{T}$, and $e^{iFt}$.

\subsubsection{Implementation of $e^{iFt}$}
The behavior of $e^{iFt}$ is to apply a phase $e^{i \lambda t}$ to the state $\ket{\lambda}$, where $\ket{\lambda}$ is the binary representation of the value $\lambda$. Supposing
$\lambda$ is greater than 0, we have:
\begin{align}
    e^{i\lambda t} \ket{\lambda}
     & = e^{i \sum_{j=1}^n 2^\lambda_j t} \ket{\lambda_1 \cdots \lambda_n}                                                                           \\
     & = e^{i 2^\lambda_1 t} \ket{\lambda_1} \otimes e^{i 2^\lambda_2 t} \ket{\lambda_2} \otimes \cdots \otimes e^{i 2^\lambda_n t} \ket{\lambda_n}.
\end{align}
Thus we can implement $e^{iFt}$ by applying phase gate $e^{i 2^\lambda_i t}$ on each of the qubit $\ket{\lambda_i}$.

In order to allow negative $\lambda$, we have to add another ancilla $\ket{p}$ to indicate the sign, and making the rotation direction of the phase gates controlled by the ancilla. The corresponding circuit is shown in Figure \ref{fig:eift}.

\begin{figure}
    \centering
    \begin{quantikz}
        \ket{p}       & \ctrl[open]{4} & \ctrl[open]{3}  & \ctrl[open]{1}      & \ctrl{4}       & \ctrl{3}        & \ctrl{1}  &                \\
        \ket{a_{n-1}}& \qw            & \qw             & \gate{e^{2^{n-1}it}}& \qw            & \qw             & \gate{e^{2^{n-1}it}}&      \\
        \vdots \\
        \ket{a_1}     & \qw            & \gate{e^{2it}}  & \qw                 & \qw            & \gate{e^{-2it}} & \qw       &                \\
        \ket{a_0}     & \gate{e^{it}}  & \qw             & \qw                 & \gate{e^{-it}} & \qw             & \qw                    &
    \end{quantikz}
    \caption{Quantum circuit for $e^{iFt}$.}
    \label{fig:eift}
\end{figure}
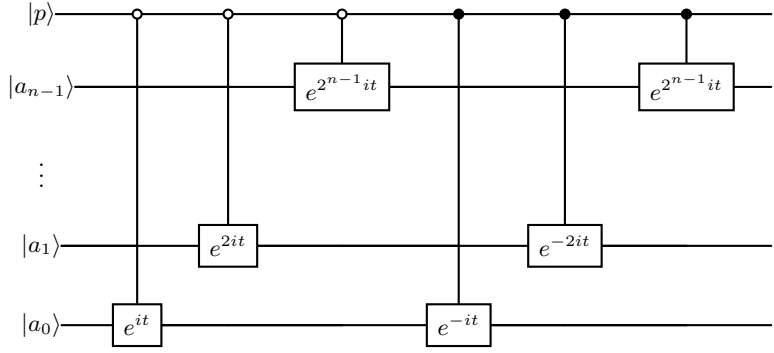

For each controlled phase gate, we can implement it by using the circuit shown in Figure \ref{fig:phase}.

\begin{figure}
    \centering
    \begin{quantikz}
        \ket{p} & \gate{R_z(\frac{\lambda}{2})}&\ctrl{1} & \qw                             & \ctrl{1} & \\
        \ket{a} & \gate{R_z(\frac{\lambda}{2})}&\targ{}  & \gate{R_z(- \frac{\lambda}{2})} & \targ{}  &
    \end{quantikz}
    \caption{Controlled phase gate.}
    \label{fig:phase}
\end{figure}
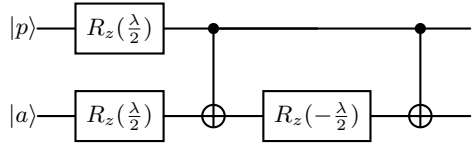

\subsubsection{Implementation of $\tilde{W}$}
Since $\tilde{W} = \bigotimes_{l=1}^{n} W^{l, n+l} $, $\tilde{W}$ is just apply $W$ on $n$ pairs of qubits, the circuit is shown in Figure \ref{fig:tildeW}.
Now for each $W$, since it has the matrix representation \ref{eq:Wmat}, we can implement it using the quantum circuit shown in \ref{fig:W}.

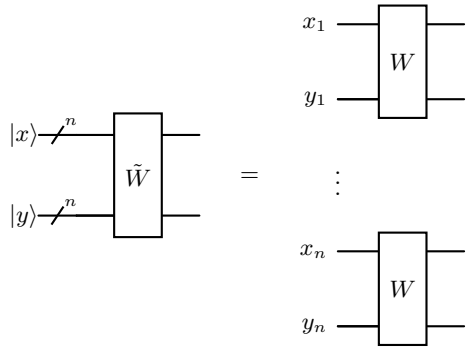
\begin{figure}
    \centering
    \begin{quantikz}
        \ket{x} & \qwbundle{n} & \gate[2]{\tilde{W}} &  \\
        \ket{y} & \qwbundle{n} & \qw         &
    \end{quantikz}
    \quad = \quad
    \begin{quantikz}
        \lstick{$x_1$} & \gate[2]{W} & \\
        \lstick{$y_1$} & \qw         &  \\
        \vdots   \\
        \lstick{$x_n$} & \gate[2]{W} &\\
        \lstick{$y_n$} & \qw     &
    \end{quantikz}
    \caption{Quantum circuit implementing the swap operator $\tilde{W}$.}
    \label{fig:tildeW}
\end{figure}

\begin{figure}
    \centering
    \begin{align}
        \begin{quantikz}
            & \gate[2]{W} & \qw \\
            &             & \qw
        \end{quantikz}
        \quad=\quad
        \begin{quantikz}
            & \targ{}  & \ctrl{1} & \ctrl{1} & \ctrl{1} & \targ{}  & \qw \\
            & \ctrl{-1} & \targ{}  & \gate{H} & \targ{}  & \ctrl{-1} & \qw
        \end{quantikz}
    \end{align}
    \begin{align}
        \quad=\quad
        \begin{quantikz}
            & \targ{} & \ctrl{1}  & \qw      & \qw      & \qw      & \ctrl{1} & \qw            & \qw & \qw & \ctrl{1} & \targ{} & \qw \\
            & \ctrl{-1}  & \targ{} & \gate{S} & \gate{H} & \gate{T} & \targ{}   & \gate{T^\dagger}& \gate{H} & \gate{S^\dagger} &\targ{} &\ctrl{-1} & \qw
        \end{quantikz}
    \end{align}
    \caption{Quantum circuit implementing the gate $W$.}
    \label{fig:W}
\end{figure}

\subsubsection{Implementation of $\tilde{T}$}
The operator $\tilde{T}$ computes the parity of each pair $x_i y_i$ and stores the result in an ancilla qubit. This can be implemented using Toffoli-like gates to compute the parity of each pair $x_i y_i$. The corresponding quantum circuit is shown in Figure \ref{fig:T}.

\begin{figure}
    \centering
    \begin{quantikz}
        \ket{x} & \qwbundle{n} & \gate[3]{\tilde{T_f}} &  \\
        \ket{y} & \qwbundle{n} & \qw         & \\
        \ket{0} & \qw & \qw         &
    \end{quantikz}
    \quad = \quad
    \begin{quantikz}
        \lstick{$x_1$}  & \ctrl{7}       &                 & [0.5cm]   &                  &\\
        \lstick{$y_1$}  & \ctrl[open]{-1}&                 &           &                  &\\
        \lstick{$x_2$}  &                & \ctrl{5}        &           &                  &\\
        \lstick{$y_2$}  &                & \ctrl[open]{-1} &           &                  &\\
        \vdots   \\
        \lstick{$x_n$}  &                &                 &           & \ctrl{2}         &\\
        \lstick{$y_n$}  &                &                 &           & \ctrl[open]{-1}  &\\
        \lstick{$0$}    & \targ{}        & \targ{}         &           & \targ{}          &
    \end{quantikz}
    \caption{Quantum circuit implementing the swap operator $T$.}
    \label{fig:T}
\end{figure}
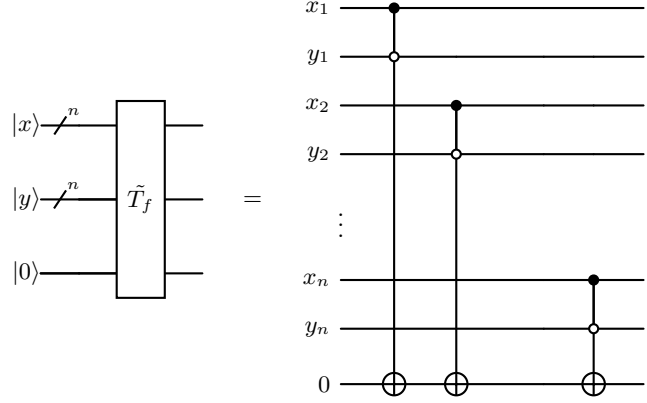

\begin{figure}
    \centering
    \begin{align}
        \begin{quantikz}
            \ket{q_1} & \qw      & \qw      & \qw              & \ctrl{2}  & \qw      & \qw      & \qw             & \ctrl{2}  &\ctrl{1} & \qw            &    \ctrl{1}&\gate{T} &    \\
            \ket{q_2} & \qw      & \ctrl{1} & \qw              & \qw       & \qw      & \ctrl{1} & \gate{T^\dagger}& \qw       &\targ{}  & \gate{T^\dagger}&   \targ{} &\gate{S} &      \\
            \ket{q_3} & \gate{H} & \targ{}  & \gate{T^\dagger} & \targ{}   & \gate{T} & \targ{}  & \gate{T^\dagger}& \targ{}   &\gate{T} & \gate{H} & &&
        \end{quantikz}
    \end{align}
    \caption{Quantum circuit implementing the Toffli-like gate.}
    \label{fig:W_gate}
\end{figure}

\subsection{Logical resource count}
Now we have a concret implementation of one sparse Hamiltonian simulation, we can count the logical resource by adding up the cost of each component. The logical resource
is defined as the number of logical qubits and the number of $\text{Clifford} + T$ gates.

\begin{table}
    \centering
    \begin{tabular}{|c|c|c|c|c|p{8cm}|}
        \hline
        Component      & $T$-gates        & $\operatorname{CNOT}$     & $S$           & $H$           & Explanation                                                                                                                                             \\
        \hline
        $\tilde{W}$    & $4n$             & $10n$      & $4n$          & $4n$          & 2 $\tilde{W}$-gates, each contains $n$ $W$-gates, and each $W$-gate requires 2 $T$-gates, 5 $\operatorname{CNOT}$ gates, 2 $S$-gates and 2 $H$-gates                   \\
        \hline
        $\tilde{T}$    & $14n$            & $12n$      & $2n$          & $4n$          & 2 $\tilde{T}$-gates, each contains $n$ Toffoli-like gates, and each Toffoli-like gate requires 7 $T$-gates, 6 $\operatorname{CNOT}$ gates, 1 $S$-gate and 2 $H$-gates. \\
        \hline
        $e^{iZt}$      & $15$             & $0$        & $3$           & $3$           & Assumes, on average, the rotation gates consume 15 $T$-gates, 3 $S$-gates and $H$-gates.                                                                \\
        \hline
        $e^{iFt}$      & $90r$            & $4r$       & $6r$          & $6r$          & $2r$ controlled rotation gates, each contains 2 $\operatorname{CNOT}$ gate and 3 rotation gates.                                                                       \\
        \hline
        \textbf{Total} & $18n + 90r + 15$ & $22n + 4r$ & $6n + 6r + 3$ & $8n + 6r + 3$ & Summing up all the components.                                                                                                                          \\
        \hline
    \end{tabular}
    \caption{Logical $T$-gate count for each component in the algorithm.}
    \label{tab:Tgatecount}
\end{table}

Thus we have the following theorem:

\begin{theorem}
    For a one-sparse Hermitian matrix $H$ with real valued entries, the gate count of the quantum circuit implementing $e^{iHt}$ is given by:
    \begin{align}
        \begin{aligned}
            T_{\operatorname{count}}    & = 90r + 18n + 15, \\
            \operatorname{CNOT}_{\operatorname{count}} & = 22n + 4r,       \\
            S_{\operatorname{count}}    & = 6n + 6r + 3,    \\
            H_{\operatorname{count}}    & = 8n + 6r + 3.
        \end{aligned}
    \end{align}
    where $r$ is the number of precision bits, $n$ is the number of qubits for input state vector $\ket{x}$.
\end{theorem}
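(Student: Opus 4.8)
The plan is to prove this entirely by additive bookkeeping over the circuit decomposition of Theorem~\ref{thm:main}, since each factor in $e^{iHt} = M \tilde{W} \tilde{T_f} (e^{iZt} \otimes e^{iFt}) \tilde{T}_f^\dagger \tilde{W}^\dagger M^\dagger$ has already been reduced to an explicit Clifford$+T$ circuit in the preceding figures. First I would record that the oracle $M$ (and $M^\dagger$) is taken as a free black box and contributes nothing to the count, so only the five inner factors need to be tallied; the four gate types tracked are $T$, CNOT, $S$, and $H$.

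Next, for each primitive I would read off the per-gate cost from its figure and multiply by its multiplicity. From Figure~\ref{fig:W}, a single $W$ compiles to $2$ $T$-gates, $5$ CNOTs, $2$ $S$-gates, and $2$ $H$-gates; since $\tilde{W} = \bigotimes_{l=1}^n W^{l,n+l}$ contains $n$ copies of $W$ and the decomposition uses two $\tilde{W}$ layers, this layer contributes $4n$, $10n$, $4n$, $4n$ of each respective gate. Similarly, Figure~\ref{fig:W_gate} shows each Toffoli-like gate costs $7$ $T$, $6$ CNOT, $1$ $S$, $2$ $H$; with $n$ such gates per $\tilde{T_f}$ and two $\tilde{T_f}$ layers, this yields $14n$, $12n$, $2n$, $4n$. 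The single rotation $e^{iZt}$ is charged the average synthesis cost of $15$ $T$, $3$ $S$, $3$ $H$. Finally, from Figures~\ref{fig:eift} and~\ref{fig:phase}, $e^{iFt}$ is built from $2r$ controlled rotations, each using $2$ CNOTs and $3$ rotation gates, and charging each rotation the same $15$-$T$ budget gives $2r \cdot 3 \cdot 15 = 90r$ $T$-gates, $4r$ CNOTs, $6r$ $S$, $6r$ $H$.

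Summing the four columns then reproduces the stated totals: $4n + 14n + 90r + 15 = 18n + 90r + 15$ for $T$; $10n + 12n + 4r = 22n + 4r$ for CNOT; $4n + 2n + 6r + 3 = 6n + 6r + 3$ for $S$; and $4n + 4n + 6r + 3 = 8n + 6r + 3$ for $H$, which is exactly the content of the \textbf{Total} row of Table~\ref{tab:Tgatecount}. I would close by noting that the equalities are stated for the circuit as drawn, so the argument is complete once every subcircuit count is verified against its figure.

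The only genuinely delicate point — and the step I expect to require the most care — is the treatment of the arbitrary-angle rotations ($e^{iZt}$ and the single-qubit rotations inside $e^{iFt}$). These are not exactly Clifford$+T$, so the figures cannot literally terminate in a finite gate set; instead the count rests on the modeling assumption that each such rotation is approximated to the target precision by, on average, $15$ $T$-gates together with a handful of $S$ and $H$ gates, as recorded in the \emph{Explanation} column of Table~\ref{tab:Tgatecount}. Making this fully rigorous would require invoking a gate-synthesis bound and tying the per-rotation $T$-count to the precision $\epsilon$; within the present theorem it is adopted as a fixed convention, so the statement is best read as a verified accounting identity conditional on that convention rather than an exact, synthesis-independent equality.
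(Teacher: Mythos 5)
Your proposal is correct and follows essentially the same route as the paper: the paper's justification is precisely the additive bookkeeping of Table~\ref{tab:Tgatecount}, tallying per-component costs from Figures~\ref{fig:W}, \ref{fig:W_gate}, \ref{fig:eift}, and \ref{fig:phase} with the same multiplicities ($2$ layers of $\tilde{W}$ and $\tilde{T}_f$ with $n$ gates each, one $e^{iZt}$, and $2r$ controlled rotations), under the same convention of charging each arbitrary rotation an average synthesis cost of $15$ $T$-gates. Your closing caveat---that the equalities hold only conditional on this synthesis convention---is exactly the reading the paper intends, as recorded in its \emph{Explanation} column.
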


\section{Scaling analysis}\label{sec:scaling}

In this section, we derive the exact scaling of the QLSA algorithm. As mentioned in \ref{sec:QLSABottleneck}, the bottleneck of the QLSA algorithm lies in the number of one-sparse Hamiltonian simulations (nHS1) we have to perform. This is determined by two factors:
\begin{enumerate}
    \item The number of one-sparse Hamiltonian simulations required to implement the s-sparse Hamiltonian simulation (the unitary).
    \item The number of unitaries (the number of clock qubits, nC) in the quantum phase estimation.
\end{enumerate}

It is crucial to recognize that the QLSA algorithm is essentially an iterative method. To suppress the error to a certain level, we need to perform the subroutine for a specific number of iterations. Thus, analyzing the scaling of the QLSA algorithm is equivalent to bounding the error.

The error of the QLSA algorithm is mainly introduced by two parts: the error introduced by the Trotterization and the error introduced by the Quantum Fourier Transform (QFT). Previous error analyses mainly focused on the latter, and a comprehensive error analysis of both parts has not been provided. Here, we will provide a detailed error analysis of the QLSA algorithm.

The main idea to bound the error of both parts simultaneously is to use an exponential-type error to express the error introduced by Trotterization. We define the \textbf{Exponential Type Error} (ETE) as
\begin{equation}
    \text{ETE} := \|\tilde{A} - A\|.
\end{equation}

This error measure quantifies the deviation in the exponent matrix of the Hamiltonian evolution, which can be viewed as the error of the input. Thus, we can separate the error analysis of Trotterization and QFT, and then add the errors together.

Next, we first derive the exponential-type error of Trotterization, followed by the error of QFT. The derivation is heavily inspired by \cite{childs2021theory}. Although the paper does not provide an exact error bound for exponential-type error, it provides the main idea of how to derive the error bound. Therefore, it is relatively straightforward to derive it ourselves. We start with:
\begin{equation}
    f(t) = e^{iH_1 t} e^{iH_2 t} \cdots e^{iH_n t} \label{eq:trot}
\end{equation}
Since f(t) is a unitary operator, we have:
\begin{equation}
    f(t) = e^{iH' t} \label{eq:trot_eqv_rep}
\end{equation}
and we want to bound:
\begin{equation}
    \frac{\|H' - H\|}{\|H\|} \label{eq:re_err}
\end{equation}
Given in the paper, we have:
\begin{equation}
    f(t) = \exp_\mathcal{T} \left( \int_{0}^{t} \mathrm{d}\tau \, (iH + \mathcal{E}(\tau)) \right) \label{eq:exp_err}
\end{equation}
Taking to 1st order in the dysen series expansion of \eqref{eq:exp_err}, we have:
\begin{equation}
    f(t) = I + iHt + \int_{0}^{t} \mathrm{d}\tau \, \mathcal{E}(\tau) + O(t^2) \label{eq:exp_err_1st}
\end{equation}
Also taking 1st order tyler expansion of \eqref{eq:trot_eqv_rep}, we have:
\begin{equation}
    f(t) = I + iH't + O(t^2) \label{eq:trot_eqv_rep_1st}
\end{equation}
Comparing \eqref{eq:exp_err_1st} and \eqref{eq:trot_eqv_rep_1st}, we have:
\begin{equation}
    H' = H + \frac{1}{it} \int_{0}^{t} \mathrm{d}\tau \, \mathcal{E}(\tau)
\end{equation}
Thus:
\begin{equation}
    \|H' - H\| = \frac{1}{t} \left\| \int_{0}^{t} \mathrm{d}\tau \, \mathcal{E}(\tau) \right\|
\end{equation}
In order to get $\mathcal{E}(\tau)$, we differentiate f(t), get:
\begin{align}
    \frac{d f(t)}{d t} & = (i H_1 +  e^{iH_1 t} (i H_2) e^{-iH_1 t} + ... +  e^{iH_1 t} ... e^{iH_{n - 1} t} (iH_n) e^{-iH_{n-1} t} ... e^{-iH_1 t}) f(t) \notag \\
                       & = (\sum_{j = 1}^{n}(\prod_{k=1}^{j-1}e^{iH_k })(iH_j t) (\prod_{k=j-1}^{1}e^{-iH_k t}))f(t)                                      \notag \\
                       & = (\sum_{j = 1}^{n}S_j(t))f(t) \label{eq:trot_frac}
\end{align}
where:
\begin{equation}
    S_j(t) = (\prod_{k=1}^{j -1}e^{iH_k t})(iH_j ) (\prod_{k=j-1}^{1}e^{iH_k t})
\end{equation}

Now we have:
\begin{equation}
    \frac{d f(t)}{d t} = (iH + \mathcal{E}(t))f(t) = (\sum_{j = 1}^{n}S_j(t))f(t)
\end{equation}

\begin{equation}
    \mathcal{E}(t) = \sum_{j = 1}^{n}S_j(t) - iH \label{eq:et}
\end{equation}

Next we will separate $ iHj $ from $S_j(t)$, the way to achieve this is by recursive expansion.
We start with the innermost term. By fundamental theorem of calculus, we have:
\begin{align}
    e^{iH_{j-1} t}(iH_j)e^{-iH_{j-1} t} & = iH_j + \int_0^t d\tau (e^{i H_{j-1} \tau }(i H_{j})e^{-i H_{j-1} \tau})'       \notag             \\
                                        & = iH_j + \int_0^t d\tau e^{i H_{j-1} \tau }[i H_{j-1}, i H_j]e^{-i H_{j-1} \tau} \notag             \\
                                        & = iH_j + \int_0^t d\tau e^{i H_{j-1} \tau }[H_{j}, H_{j-1}]e^{-i H_{j-1} \tau} \label{eq:inner_sep}
\end{align}

For the second layer, we have:

\begin{align}
    e^{iH_{j-2}} e^{iH_{j-1} t}(iH_j)e^{-iH_{j-1} t}e^{-iH_{j-2}} & =  e^{iH_{j-2}t}(iH_j + \int_0^t d\tau e^{i H_{j-1} \tau }[H_{j}, H_{j-1}]e^{-i H_{j-1} \tau})e^{-iH_{j-2}t}      \notag        \\
                                                                  & = e^{iH_{j-2}t} iH_j e^{-iH_{j-2}t}  \notag                                                                                     \\
                                                                  & + (e^{iH_{j-2}t}) \int_0^t d\tau e^{i H_{j-1} \tau }[H_{j}, H_{j-1}]e^{-i H_{j-1} \tau} ( e^{- iH{j-2}t}) \notag                \\
                                                                  & =  iH_j + \int_0^t d\tau e^{i H_{j-2} \tau }[H_{j}, H_{j-2}]e^{-i H_{j-2} \tau} +   \notag                                      \\
                                                                  & +( e^{iH_{j-2}t}) \int_0^t d\tau e^{i H_{j-1} \tau }[H_{j}, H_{j -1}]e^{-i H_{j-1} \tau} (e^{ -iH{j-2}t}) \label{eq:2layer_sep}
\end{align}
Repeat this procedure until all the layers are expanded, we have:
\begin{align}
    S_j(t) & =   (\prod_{k=1}^{j -1}e^{iH_k t})(iH_j ) (\prod_{k=j-1}^{1}e^{iH_k t}) \notag                                                                                                               \\
           & = i H_j + \sum_{k = 1}^{j-1}( \prod_{l=1}^{j-k -1} e^{i H_l t}) \int_0^t d\tau \, e^{i H_{j-k} \tau} [H_j, H_{j-k}] e^{-i H_{j-k} \tau} (\prod_{l=j-k-1}^{1} e^{-i H_j t}) \label{eq:sep_sk}
\end{align}
Then substitute $S_j(t)$ into \eqref{eq:et}, we have:
\begin{align}
    \mathcal{E}(t) & = \sum_{j = 1}^{n}S_j(t) - iH \notag                                                                                                                                                                                     \\
                   & = \sum_{j = 1}^{n} \sum_{k = 1}^{j-1}( \prod_{l=1}^{j-k -1} e^{i H_l t}) \int_0^t d\tau \, e^{i H_{j-k} \tau} [H_j, H_{j-k}] e^{-i H_{j-k} \tau} (\prod_{l=j-k-1}^{1} e^{-i H_j t}) + \sum_{j = 1}^{n} i H_j - iH \notag \\
                   & = \sum_{j = 1}^{n} \sum_{k = 1}^{j-1}( \prod_{l=1}^{j-k -1} e^{i H_l t}) \int_0^t d\tau \, e^{i H_{j-k} \tau} [H_j, H_{j-k}] e^{-i H_{j-k} \tau} (\prod_{l=j-k-1}^{1} e^{-i H_j t}) \label{eq:etau}
\end{align}
Now with $\mathcal{E}(t)$ in hand, we can bound \eqref{eq:re_err} using triangle inequalities:
\begin{align}
    \|H' - H\| & = \frac{1}{t} \left\| \int_{0}^{t} \mathrm{d}\tau \, \mathcal{E}(\tau) \right\| \notag                                                                                                                                                                                  \\
               & \leq \frac{1}{t} \int_{0}^{t}  \mathrm{d}\tau \, \left\| \mathcal{E}(\tau) \right\|  \notag                                                                                                                                                                             \\
               & = \frac{1}{t} \int_{0}^{t} \mathrm{d}t_1 \left\| \sum_{j = 1}^{n} \sum_{k = 1}^{j-1}( \prod_{l=1}^{j-k -1} e^{i H_l t_1}) \int_0^{t_1} \mathrm{d}\tau \, e^{i H_{j-k} \tau} [H_j, H_{j-k}] e^{-i H_{j-k} \tau} (\prod_{l=j-k-1}^{1} e^{-i H_j t_1}) \right\| \notag     \\
               & \leq \frac{1}{t} \int_{0}^{t} \mathrm{d}t_1 \sum_{j = 1}^{n} \sum_{k = 1}^{j-1} \left\| ( \prod_{l=1}^{j-k -1} e^{i H_l t_1}) \int_0^{t_1} \mathrm{d}\tau \, e^{i H_{j-k} \tau} [H_j, H_{j-k}] e^{-i H_{j-k} \tau} (\prod_{l=j-k-1}^{1} e^{-i H_j t_1}) \right\| \notag \\
               & = \frac{1}{t} \int_{0}^{t} \mathrm{d}t_1  \sum_{j = 1}^{n} \sum_{k = 1}^{j-1} \left\|\int_0^{t_1} \mathrm{d}\tau \, e^{i H_{j-k} \tau} [H_j, H_{j-k}] e^{-i H_{j-k} \tau}  \right\| \notag                                                                              \\
               & \leq \frac{1}{t} \int_{0}^{t} \mathrm{d}t_1  \sum_{j = 1}^{n} \sum_{k = 1}^{j-1} \int_0^{t_1} \mathrm{d}\tau \left\| \, e^{i H_{j-k} \tau} [H_j, H_{j-k}] e^{-i H_{j-k} \tau}  \right\| \notag                                                                          \\
               & = \frac{1}{t} \int_{0}^{t} \mathrm{d}t_1  \sum_{j = 1}^{n} \sum_{k = 1}^{j-1} \int_0^{t_1} \mathrm{d}\tau \left\| \,   [H_j, H_{j-k}]   \right\| \notag                                                                                                                 \\
               & \leq \frac{1}{t} \int_{0}^{t} \mathrm{d}t_1  \sum_{j = 1}^{n} \sum_{k = 1}^{j-1} \int_0^{t_1} \mathrm{d}\tau \, 2 \, \max_m \left\| \,   H_m   \right\|^2 \notag                                                                                                         \\
               & = \frac{1}{t} \, \frac{n (n-1)}{2} \, \frac{t^2}{2} \, 2 \, \max_m \left\| \,   H_m   \right\|^2 \notag                                                                                                                                                                  \\
               & = \frac{n(n-1) t }{2}\, \max_m \left\| \,   H_m   \right\|^2 \notag                                                                                                                                                                                                      \\
\end{align}

This bound is tight, as we can take example of $n = 2, t = 0.2, H_1 = [[1,0],[0,-1]]$, and $H_2 = [[0,1],[1,0]]$, the bound and the actual error gives the same value 0.1. However, this bound is significantly above average, the intuition behind this is that
in \eqref{eq:et}, the error term is a sum of all the commutators, and the commutators can
cancel each other, but the bound adds up the norm of all the commutators.

Intuitively this is similar to bounding the sum of random numbers, the bound is the number of random numbers times the maximum of the random numbers.

So the better strategy here is to use probabilistic bound instead, where we tries to
bound the error with failure rate below certain threshold.

\begin{figure}
    \centering
    \includegraphics[width=0.8\linewidth]{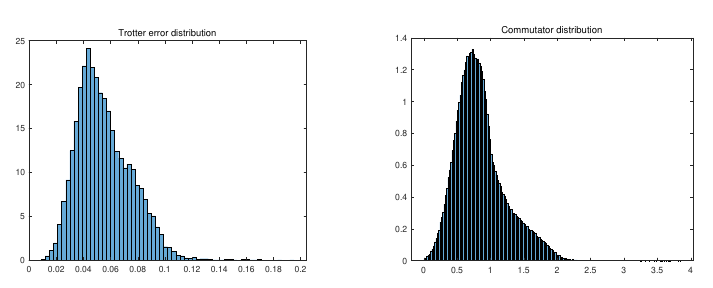}
    \caption{Trotter error and commutator distribution.}
    \label{fig:tcdistr}
\end{figure}

From the numerical result, we can see that the error does not depend on n. Thus, we can simply do the numerical experiment on small matrix size, and take the upper bound around the edge of distribution, we have:
\begin{equation}
    \epsilon = \|\frac{t}{r}\|
\end{equation}
    
Where $t$ is the time step and $r$ is the Trotter repetition.
Now we have the probabilistic bound on the matrix norm, we can transform it into error on $x$ by applying the following formula:

\begin{equation}
    \frac{\|\Delta {x}\|}{\|{x}\|} \leq \frac{\kappa \frac{\|\Delta A\|}{\|A\|}}{1 - \kappa \frac{\|\Delta A\|}{\|A\|}} \label{eq:err_x}
\end{equation}

Since $\|A\| = 1$, and $\|x\| = 1$, we have:
\begin{equation}
\begin{split}
    \|\Delta {x}\| \leq \frac{\kappa \|\Delta A\|}{1 - \kappa \| \Delta A\|} \leq \frac{\kappa t}{r - \kappa t} \label{eq:deltax}
\end{split}
\end{equation}


\begin{equation}
     \| \Delta x_1\| \leq \sqrt{\frac{20}{3}} \pi \frac{k}{t_0}
\end{equation}

Under the constraint:

\begin{equation}
    t_0 \leq 2 \pi T \Rightarrow \frac{1}{T} \leq 1 - \frac{2 \pi \kappa}{t0}
\end{equation}

Here, $T$ represents the number of repetitive unitaries, which is 2 to the power of number of clock qubits, and $t_0$ is the maximum evolution time, given by $t T$. These constraints ensure that the eigenvalues fall within the range of the QPE estimation.
For simplicity, let's assume $t_0 = \pi T$ to meet the first constraint, and later we'll see that the second constraint is satisfied automatically. Then the error in this step is given by:

\begin{equation}
    \| \Delta x_2\| \leq \sqrt{\frac{20}{3}} \frac{k}{T}
\end{equation}

Now let's combine the error in both steps, by using the triangular inequality:

\begin{equation}
    \| \Delta x\| = \| \Delta x_1 + \Delta x_2\| \leq \| \Delta x_1\| + \| \Delta x_2\|
\end{equation}

For simplicity we let $\| \Delta x_1 \|\leq \frac{\epsilon}{2}$ and $\| \Delta x_2 \|\leq \frac{\epsilon}{2}$, then we have:

\begin{equation}
    r \geq \left(\frac{2}{\epsilon} + 1\right) \pi \kappa \approx \frac{2 \pi \kappa}{\epsilon}
\end{equation}

\begin{equation}
    T \geq \sqrt{\frac{80}{3}} \frac{\kappa}{\epsilon}
\end{equation}

Finally, we combine those two to get the number of HS1 operations needed:

\begin{equation}
    n_{\text{HS1}} = T r s \geq \frac{\sqrt{\frac{320}{3}} \pi \kappa^2 s}{\epsilon^2}
\end{equation}

Where we assume that there are $s$ number of HS1 per HS, this assumption requires that the oracle needs to implement the seperation of an $s$-sparse matrix into s number of one sparse matrices, this does not hold in general, but for applications with certain structures this can be done.

\subsection{Logical resource scaling}
Before we derived the scaling of the logical resource of each HS1 operation, and in this section we derived the scaling of number of HS1 operations. By multiply both scaling together, we have the following theorem:

\begin{theorem}
    The scaling of the logical gates is given by:
    \begin{align}
        n_{\operatorname{gates}} = \frac{\sqrt{\frac{320}{3}} \pi \kappa^2 s}{\epsilon^2} \times \operatorname{HS1}_{\operatorname{gates}}
    \end{align}
\end{theorem}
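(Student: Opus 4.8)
The plan is to obtain $n_{\operatorname{gates}}$ as a product of two quantities that have already been established independently in the preceding subsections: the total number of one-sparse Hamiltonian simulation (HS1) invocations, and the per-invocation logical gate cost $\operatorname{HS1}_{\operatorname{gates}}$. The guiding observation is that, at the level of resource counting, the QLSA circuit is nothing more than a sequence of identical HS1 blocks, interleaved with Clifford operations that are assumed cheap and with oracle queries that are tallied separately. The logical gate count is additive over these blocks, so if every block costs the same $\operatorname{HS1}_{\operatorname{gates}}$ and there are $n_{\operatorname{HS1}}$ of them, the total is simply their product.

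First I would invoke the per-HS1 gate count derived above, namely $\operatorname{HS1}_{\operatorname{gates}}$ with $T_{\operatorname{count}} = 90r + 18n + 15$ together with the analogous Clifford counts, which follows from summing the contributions of $\tilde{W}$, $\tilde{T}$, $e^{iZt}$, and $e^{iFt}$ tabulated in Table~\ref{tab:Tgatecount}. This factor depends only on $n = \log N$ and the number of precision bits $r$, and crucially not on how many times the block is repeated. Second, I would substitute the repetition count obtained from the error analysis, $n_{\operatorname{HS1}} = Trs \geq \sqrt{320/3}\,\pi \kappa^2 s / \epsilon^2$, which packages together the Trotter repetition $r \gtrsim 2\pi\kappa/\epsilon$, the number of QPE unitaries $T \gtrsim \sqrt{80/3}\,\kappa/\epsilon$, and the $s$ one-sparse pieces per $s$-sparse simulation step. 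Multiplying the two factors yields the claimed expression directly.

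The genuinely hard work does not reside in this theorem, since the multiplication is immediate once the two inputs are in hand; rather it lies in the inputs themselves, which are developed in the earlier scaling analysis. The main obstacle, already resolved, is establishing the repetition count: this required a simultaneous treatment of both the Trotterization error, controlled through the exponential-type error bound $\|H' - H\| \leq n(n-1)t\,\max_m\|H_m\|^2/2$ supplemented by the probabilistic refinement $\epsilon \approx \|t/r\|$, and the QFT/QPE error, followed by conversion into solution-state error via the perturbation bound $\|\Delta x\| \leq \kappa\|\Delta A\| / (1 - \kappa\|\Delta A\|)$ specialized to $\|A\| = \|x\| = 1$.

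The remaining subtlety I would flag explicitly is that the clean product form presumes each HS1 block incurs exactly $\operatorname{HS1}_{\operatorname{gates}}$ logical gates with negligible cross-block overhead, and that the decomposition of an $s$-sparse Hamiltonian into one-sparse pieces contributes exactly the factor $s$. As noted in the scaling section, this last assumption does not hold for general oracles and is valid only for inputs whose sparsity structure admits such a clean $s$-fold splitting; the theorem should therefore be read as a saturated estimate under that structural hypothesis rather than a universally tight bound.
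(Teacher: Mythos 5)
Your proposal is correct and takes essentially the same approach as the paper, whose proof is likewise the one-line observation that the total logical gate count is the product of the number of HS1 invocations $n_{\text{HS1}} = Trs \gtrsim \sqrt{320/3}\,\pi\kappa^2 s/\epsilon^2$ (from the error analysis) and the per-invocation gate count $\operatorname{HS1}_{\operatorname{gates}}$ (from Table~\ref{tab:Tgatecount}). Your additional caveat about the $s$-fold one-sparse decomposition assumption matches the discussion the paper itself gives immediately after the theorem, so nothing is missing.
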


\begin{proof}
    The proof is straightforward, we just need to multiply the scaling of the logical resource of each HS1 operation with the scaling of the number of HS1 operations.
\end{proof}

Finally, compared to \cite{harrow2009quantum}, the reduction in $ s $-dependence is achieved by assuming that the matrix is perfectly decomposed into exactly $ s $ sub matrices at the oracle level. Although this assumption does not hold for general sparse matrices, many QLSA applications involve fixed matrix structures, which allow for efficient oracle implementation that satisfies this assumption, for example, in \cite{PLiu_2021} and \cite{Liu_2024}. If we remove this assumption, we'll have to additionally implement the decomposition procedure, one approach is to utilize graph coloring algorithm, which decompose the matrix into $6s^2$ one sparse matrices, thus increase a prefactor of $6$ and square the dependence on sparsity. Furthermore, the graph coloring procedure is being called in each HS1, and according to our implementation, the graph coloring procedure involve around $200\times \log N$ $T$-gates, this cost is about $10$ times the other cost in HS1 ($18\log N + 90r + 15$), thus introducing another prefactor of $10$, so overall, the cost for graph coloring based general QLSA will have a scaling of $60sT$. 

Assuming $(\epsilon = 0.1, \kappa = s = \log_2{N})$, we have the scaling of the different gates in Figure \ref{fig:gate_scaling}. 

\begin{figure}
    \centering
    \includegraphics[width=0.7\textwidth]{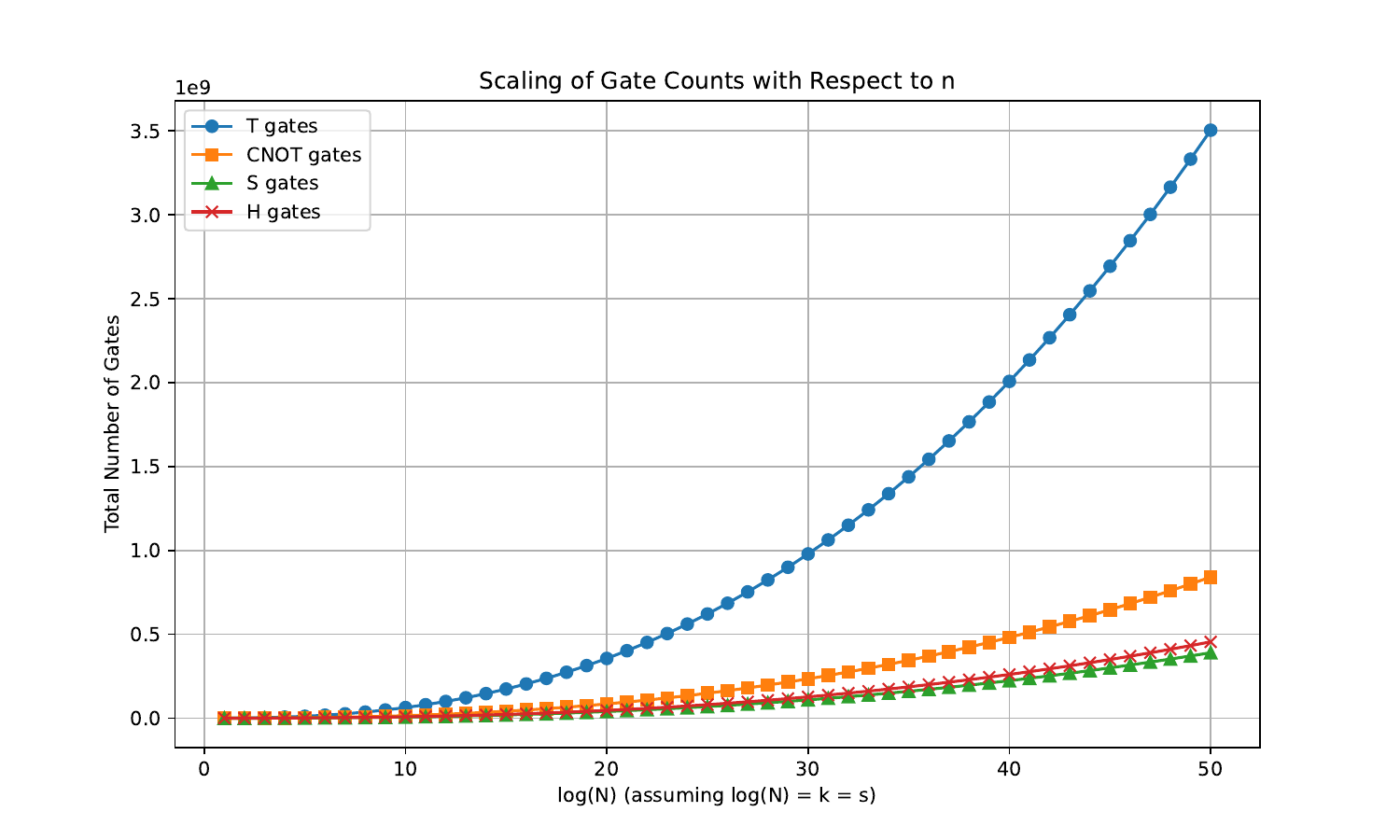}
    \caption{Scaling of the logical gates with respect to the problem size.}
    \label{fig:gate_scaling}
\end{figure}

\section{Surface code protocol}

\subsection{Introduction to surface code}

To reliably store and process quantum information over extended periods, active error correction is essential. This is achieved by encoding multiple physical qubits into logical qubits using quantum error-correcting codes \cite{Preskill_1998,RevModPhys.87.307,Campbell_2017}. Among these, the surface code is not only the most widely used but also one of the most crucial due to its compatibility with the locality constraints of practical quantum hardware, such as superconducting qubits, which only support two-dimensional local operations \cite{Kitaev_2003, PhysRevA.86.032324}.

However, despite its importance and widespread adoption, the surface code introduces significant computational overhead. The replacement of physical qubits with logical qubits drastically increases space requirements, while the restriction to two-dimensional local operations imposes additional time costs. Arbitrary quantum gates may require multiple time steps instead of executing in a single step, making the choice of surface code schemes—which define parameters such as code distance, logical gate protocols, and resource allocation—critical in determining the actual overhead.

To accurately assess the resource demands of a quantum algorithm, it is essential to analyze the specific surface code scheme under which it is executed. By optimizing these schemes, we can minimize space-time overhead and better evaluate the feasibility of quantum computations within a surface-code-based architecture.

\subsection{Key concepts}

\subsubsection{Patch}
A \emph{patch} is a two-dimensional regular lattice of entangled physical qubits that serves as the substrate on which logical qubits are defined. Physical qubits within a patch are categorized into two types:

\begin{itemize}
    \item \textbf{Data qubits}: These qubits store quantum information and are measured less frequently, primarily during computational operations.
    \item \textbf{Syndrome qubits}: These qubits interact repeatedly with neighboring data qubits and are frequently measured to detect the presence of errors.
\end{itemize}

Logical qubits within a patch can perform logical operations (or gates) using a technique known as \emph{lattice surgery}, which is beyond the scope of this discussion. The code distance, denoted as $d$, determines the error-correcting capability of the patch. A patch of code distance $d$ consists of $d^2$ physical qubits, with larger values of $d$ yielding higher fidelity computations.

\subsubsection{Blocks}
Blocks are functional units that organize multiple patches according to specific rules. Blocks are categorized as follows:

\begin{itemize}
    \item \textbf{Data blocks}: These accommodate logical data qubits and execute logical operations, including logical gates.
    \item \textbf{Distillation blocks}: These are responsible for generating \emph{magic states}, which are necessary for executing certain non-Clifford gates.
\end{itemize}

For Clifford gates, computations can be efficiently performed within the data block. However, executing a $T$-gate requires magic state resources, making it significantly more challenging. The distillation block produces magic states, which are then consumed by the data block to enable $T$-gate operations. Since both processes rely on complex and time-consuming protocols, $T$-gates become the primary bottleneck of quantum computation.

\subsubsection{Protocols}
Protocols define how patches within data blocks and distillation blocks are organized and how magic states are produced and consumed. Protocol selection is crucial in optimizing the space-time trade-off of quantum computations. 

\begin{itemize}
    \item \textbf{Distillation protocols}: By allocating more patches to a distillation block, magic states can be generated more quickly or with higher fidelity.
    \item \textbf{Data block protocols}: Increasing the number of patches in a data block enables faster consumption of magic states, enhancing computational speed.
\end{itemize}

\begin{table}\label{tab:distillation_proto}
    \centering
    \renewcommand{\arraystretch}{1.3}  
    \setlength{\tabcolsep}{10pt}       
    \caption{Distillation Protocol Parameters}
    \label{tab:distillation}
    \begin{tabular}{|l|c|c|c|}
        \hline
        \textbf{Protocol} & \textbf{Number of Magic Tiles} & \textbf{Production Time} & \textbf{Error Rate Coefficient } \\
        \hline
        15-1              & 11                             & 11                       & $35p^3$                               \\
        \hline
        116-12            & 44                             & 9.27                     & $4.125p^4$                            \\
        \hline
        225-1             & 176                            & 5.5                      & $1.5p^7$                              \\
        \hline
    \end{tabular}
\end{table}

\begin{table}
    \centering
    \renewcommand{\arraystretch}{1.3}  
    \setlength{\tabcolsep}{10pt}       
    \caption{Data Protocol Parameters}
    \label{tab:data}
    \begin{tabular}{|l|c|c|}
        \hline
        \textbf{Protocol} & \textbf{Magic Consumption Time} & \textbf{Number of Tiles per $n$ Qubits} \\
        \hline
        Compact           & 9                               & $1.5n + 3$                              \\
        \hline
        Intermediate      & 5                               & $2n + 4$                                \\
        \hline
        Fast              & 1                               & $2n + \sqrt{8n} + 1$                    \\
        \hline
    \end{tabular}
\end{table}

A comprehensive set of protocols is described in \cite{litinski2019game}. The parameters of these protocols are summarized in Table \ref{tab:distillation} and \ref{tab:data}. Later in this work, we optimize the space-time cost of an algorithm with a given $T$-gate count and physical parameters by selecting appropriate protocols.

\subsubsection{Surface code scheme}
A \emph{surface code scheme} provides a complete description of a surface code implementation. It is defined by four key parameters:

\begin{enumerate}
    \item The code distance of the patches.
    \item The data block protocol.
    \item The distillation protocol.
    \item The number of distillation blocks.
\end{enumerate}

These parameters collectively determine the efficiency and reliability of the quantum computation. By carefully selecting and optimizing these elements, it is possible to achieve an optimal balance between resource utilization and computational fidelity.

\subsection{Surface-code-based physical resource estimator}
The physical resource estimator is a function that takes the logical resource requirements and quantum computer hardware parameters as inputs and outputs the surface code scheme with the lowest space-time cost. It also provides the corresponding runtime and the number of physical qubits required.

Fundamentally, finding the optimal surface code scheme is a constrained optimization problem. The primary constraint is that both the magic state and the logical qubits must achieve a certain level of fidelity to ensure the final error rate remains below a predefined threshold of 0.01. The objective of the optimization is to minimize the space-time cost, which is defined as the product of runtime and the number of physical qubits. The optimization parameters consist of four key aspects of the surface code scheme: the distillation protocol, the number of distillation blocks, the data block protocol, and the code distance.

We have implemented the optimization procedure outlined in \cite{litinski2019game} to determine the most efficient surface code scheme. The corresponding pseudocode can be found in \ref{fig:psud}.

\begin{figure}
    \centering
    \includegraphics[width=0.75\linewidth]{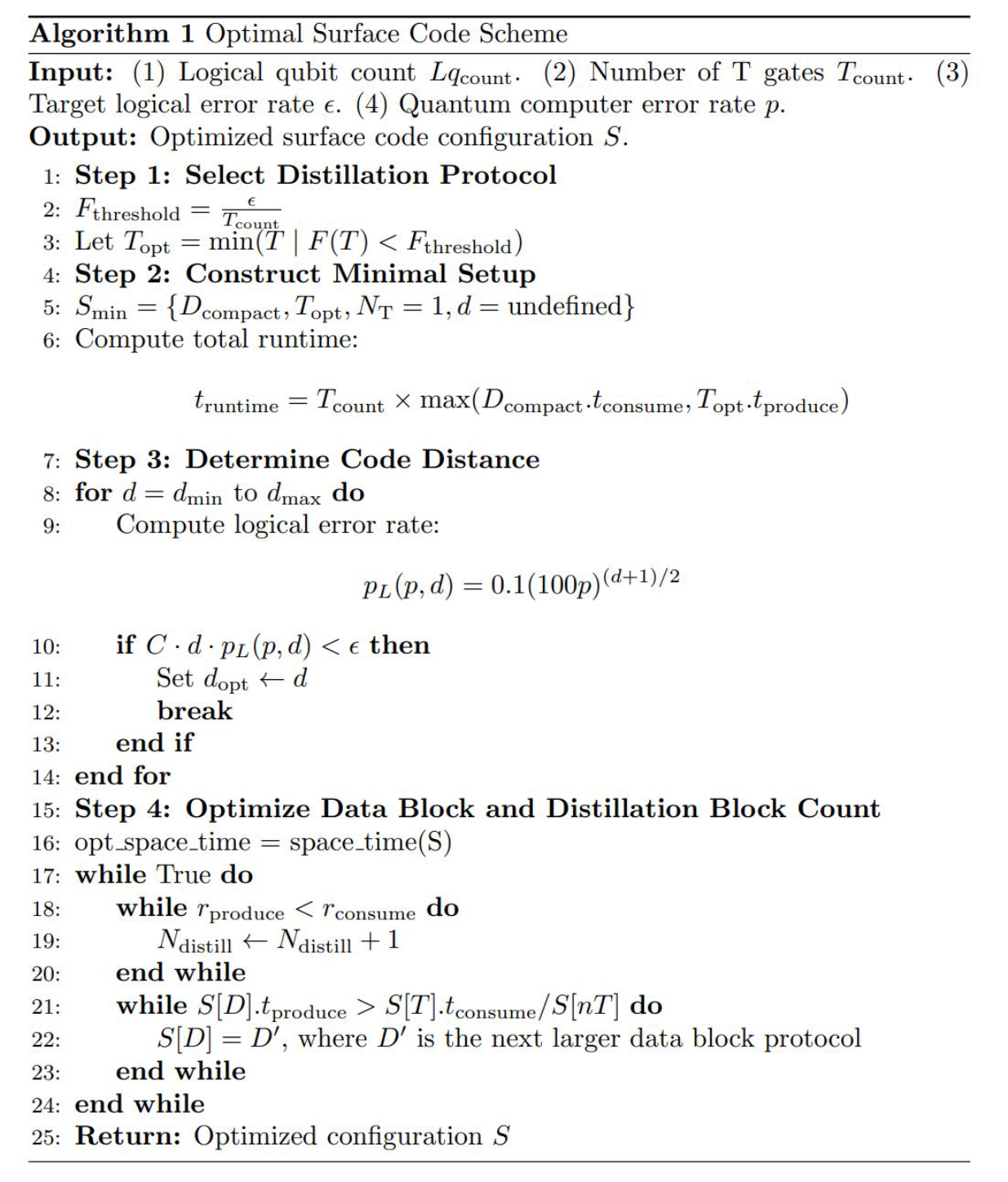}
    \caption{Optimizer Pseudo Code}
    \label{fig:psud}
\end{figure}

The procedure consists of four steps, which we describe as follows:

\subsubsection{Step 1: Determine the distillation protocol}
A more sophisticated distillation protocol produces magic states with higher fidelity but at the cost of using more patches (and thus more physical qubits). The parameters of several distillation protocols are listed in \cite{litinski2019game}. In this step, our goal is to select the most cost-effective distillation protocol that achieves sufficient fidelity to keep the overall computation error rate below 0.01. 

The fidelity threshold is defined as the error rate divided by the total number of $T$-gates. We simply choose the least expensive distillation protocol that surpasses this fidelity threshold.

\subsubsection{Step 2: Construct a minimal setup}
In this step, we construct a surface code scheme that occupies the smallest possible space. While this minimal setup is not optimal, it serves as a starting point to determine the code distance. Subsequent optimizations of the data block protocol will build upon this minimal setup.

The minimal setup consists of a compact data block combined with a single distillation block determined in Step 1. The total runtime of this setup is given by the number of $T$-gates multiplied by the clock cycle required to execute each $T$-gate. The latter is determined by the slower operation between magic state production and consumption.

For example, referring to \ref{tab:data}, the compact data block consumes a magic state every 9 clock cycles, while the 15-to-1 distillation block occupies 11 tiles and outputs a magic state every 11 clock cycles. Since the slower operation dictates the runtime, the effective clock cycle for $T$-gate execution is 11 cycles. If the total $T$-gate count is $10^8$, then the algorithm completes in $11 \times 10^8$ time steps.

\subsubsection{Step 3: Determine the code distance}
A higher code distance reduces the logical qubit error rate but requires more physical qubits. Denoting the code distance by $d$, the number of physical qubits per patch is $d^2$. The logical error rate per logical qubit per code cycle can be approximated as [12]:

\begin{equation}
	p_L(p,d) = 0.1(100p)^{(d+1)/2} \, .
\end{equation}

The code distance must be sufficiently large to suppress the logical error rate such that the total logical error probability for the entire algorithm remains below 0.01. This requirement translates into the following condition:

\begin{equation}
	164 \times 11 \times 10^8 \times d \times p_L(10^{-4},d) < 0.01 \, .
\end{equation}

\subsubsection{Step 4: Optimize the data block protocol and distillation block count}

Starting from the minimal setup, we identify the performance bottleneck, which typically lies in the speed of magic state production. To accelerate $T$-gate execution, we increase the number of distillation blocks until the magic state production rate surpasses the consumption rate. At this point, we switch to a larger data block protocol to further enhance magic state consumption.

In principle, repeating this process eventually yields a time-optimal surface code scheme, though not necessarily a space-time optimal one. However, for our limited set of distillation and data block protocols, we observe that the time-optimal scheme coincides with the space-time optimal scheme. This suggests that, in the absence of physical qubit constraints, the optimal surface code scheme always employs a \emph{fast block} configuration, with a number of distillation blocks precisely matching the magic state consumption rate.

Thus, in the setting of infinite physicla qubits, what our optimizer does is simply decide the distillation protocol, and code distance, then the data protocol must be fast block, and the number of distillation block must be the smallest number such that the speed of magic state production matches the magic state consumption speed of fast block. 

However, more intriguing patterns emerge when the number of physical qubits is constrained. The optimized surface code scheme, as shown in Figure~\ref{fig:opt-proto}, illustrates how the optimal strategy evolves with increasing matrix size. When the matrix size is small, the surface code prioritizes a time-optimal scheme. As the matrix size grows (the sparsity also grows with $\log N$), the scheme gradually reduces the distillation block count, reaching a minimum of three distillation blocks. Beyond this point, further increases in matrix size lead to a transition in data block types, shifting from a standard block to an intermediate block, and eventually to a compact block. Notably, as the data block type changes, the runtime increases significantly.

\begin{figure}
    \centering
    \includegraphics[width=1.0\linewidth]{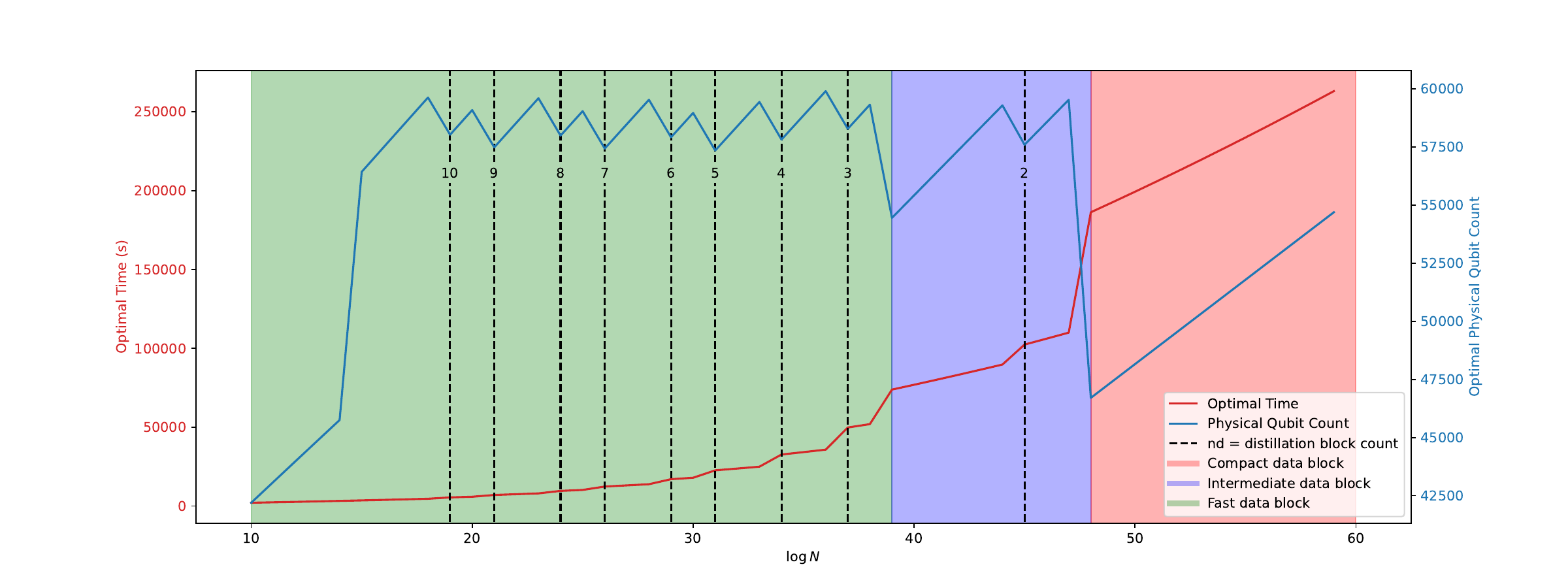}
    \caption{\textbf{Optimizer Result.} This figure shows the optimization results for different protocols, where we used the same setting as \ref{fig:big_figure} except that the physical qubits is restricted to 60000. The red and blue lines represent the optimal time and physical qubit count, respectively. The black dashed lines indicate changes in the distillation block count. The colored background represents different data block protocols.}
    \label{fig:opt-proto}
\end{figure}

\section{The classical counterparts} \label{sec:clascaling}

\subsection{Conjugate Gradient Method}

In this section we introduce the runtime resource estimation of the Conjugate Gradient method. 

\begin{definition}[LSP]\label{def: LSP}
The Linear System Problem: Find a vector $x\in \mathbb{R}^{n}$ such that it satisfies equation $Ax=b$ with coefficient matrix $A\in \mathbb{R}^{m \times n}$ and right-hand side (RHS) vector $b\in \mathbb{R}^n$.
\end{definition}

A basic approach for solving an LSP is Gaussian elimination, or LU factorization,  with ${O}(N^3)$ arithmetic operations. 
If $A$ is a square symmetric positive semi definite (PSD) matrix, we can also apply Cholesky factorization with ${O}(N^3)$ arithmetic operations. 
The best complexity for an iterative algorithm with respect to $N$ is ${O}(Ns\sqrt{\kappa}\log(1/\epsilon))$ arithmetic operations for the Conjugate Gradient (CG) method solving systems with symmetric PSD matrices, where $s$ is the maximum number of non-zero elements in any row or column of $A$, $\kappa$ is the condition number of $A$, and $\epsilon$ is the error allowed. If matrix $A$ is just symmetric, one can use the Lanczos algorithm with higher complexity.
For an LSP with a general square matrix $A$, the best iterative method is the GMRES algorithm, which has ${O}(n^3)$ worst-case complexity \cite{iterativelsp}. 
For problems in the form of $E^TE x = E^T \psi$, known as normal equations, one can use a version of CG methods with complexity ${O}(nd\kappa_{E}\log(1/\epsilon))$, where $\kappa_E$ is the condition number of matrix $E$ \cite{iterativelsp}. 
For a linear system in general form with a non-PSD non-symmetric matrix, one can use the reformulation $A^TA x =A^T b$ and use a CG method to solve it. 
Although CG methods for this reformulation have better worst-case complexity than GMRES for the original system, practically GMRES has better performance, especially for large sparse systems with a large condition number \cite{iterativelsp}.

Here, we want to use CGNE (Algorithm 8.5 of \cite{iterativelsp}) as indicated in Algorithm~\ref{fig:cgne}.

\begin{figure}
    \centering
    \includegraphics[width=0.75\linewidth]{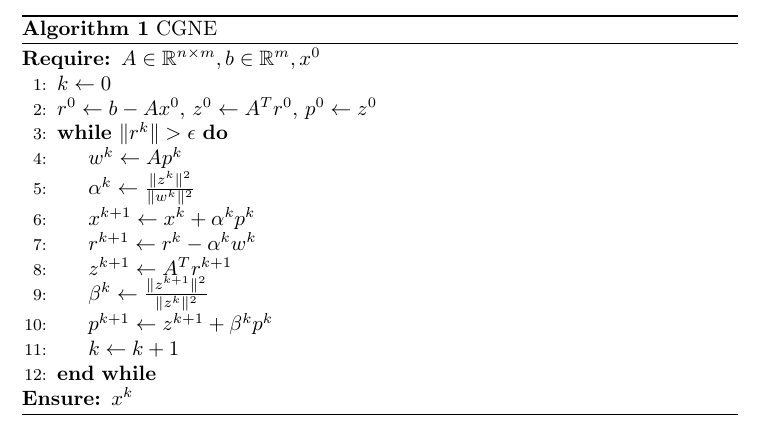}
    \caption{CGNE Algorithm}
    \label{fig:cgne}
\end{figure}

As we can see, there is no matrix-matrix product in the CG algorithm. At each iteration, we need to perform two mat-vec products. For each sparse mat-vec product, we require $2Ns$ FLOPs. Additionally, we need three times vector summation with $2N$ FLOPs. Also, $8N$ FLOPs needed for calculating $\alpha$ and $\beta$. Total FLOPs needed in each iteration is 
$4Ns+6N+8N.$
Thus the dominant cost is for mat-vec products.
\begin{theorem}
    Starting from $x^0=0$, the algorithm in Figure \ref{fig:cgne} reaches to $\frac{\|x^{k}-x\|}{\|x\|}\leq \epsilon$ after 
    $k\geq \frac{1}{2}\kappa \log(\frac{2}{\epsilon}).$
\end{theorem}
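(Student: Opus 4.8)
The plan is to reduce the statement to the classical Chebyshev-based convergence bound for the Conjugate Gradient iteration and then to do the condition-number bookkeeping that is special to the normal-equations formulation. The key observation is that CGNE (Craig's method, Algorithm 8.5 of \cite{iterativelsp}) is nothing but CG applied to the symmetric positive definite system $AA^{T}y=b$, with the approximate solution recovered as $x^{k}=A^{T}y^{k}$. Hence every iterate lives in a Krylov subspace generated by $AA^{T}$, and the full CG machinery applies with system matrix $M=AA^{T}$, whose condition number is $\kappa(M)=\kappa^{2}$, where $\kappa=\sigma_{\max}(A)/\sigma_{\min}(A)$ is the condition number of $A$.

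First I would record the standard CG convergence estimate in the energy norm: for CG solving $My=b$ with $M$ SPD, $\|y^{k}-y\|_{M}\le 2\big(\tfrac{\sqrt{\kappa(M)}-1}{\sqrt{\kappa(M)}+1}\big)^{k}\|y^{0}-y\|_{M}$, which follows from the minimax property of CG over polynomials together with the extremal Chebyshev polynomial on the spectral interval $[\lambda_{\min}(M),\lambda_{\max}(M)]$; I would cite this rather than reprove it. The second ingredient is the identity that makes CGNE minimize the genuine $2$-norm error: since $x^{k}-x=A^{T}(y^{k}-y)$, we have $\|x^{k}-x\|_{2}^{2}=(y^{k}-y)^{T}AA^{T}(y^{k}-y)=\|y^{k}-y\|_{M}^{2}$, so the CG energy norm for $M=AA^{T}$ coincides \emph{exactly} with $\|x^{k}-x\|_{2}$. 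Substituting $\sqrt{\kappa(M)}=\kappa$ and using $x^{0}=0$, so that $\|x^{0}-x\|_{2}=\|x\|_{2}$, the energy-norm bound becomes $\tfrac{\|x^{k}-x\|}{\|x\|}\le 2\big(\tfrac{\kappa-1}{\kappa+1}\big)^{k}$. This is precisely the step that turns the usual $\sqrt{\kappa}$ CG scaling into the $\kappa$ scaling claimed in the theorem.

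To finish, I would impose $2\big(\tfrac{\kappa-1}{\kappa+1}\big)^{k}\le\epsilon$, take logarithms, and solve for $k$ to obtain the exact threshold $k\ge \log(2/\epsilon)\big/\log\tfrac{\kappa+1}{\kappa-1}$. The clean closed form then follows from the elementary inequality $\log\tfrac{\kappa+1}{\kappa-1}=2\,\mathrm{arctanh}(1/\kappa)\ge 2/\kappa$, valid for $\kappa>1$ since $\mathrm{arctanh}(u)=u+u^{3}/3+\cdots\ge u$. This bounds the exact threshold above by $\tfrac{1}{2}\kappa\log(2/\epsilon)$, so any $k\ge\tfrac12\kappa\log(2/\epsilon)$ guarantees the stated relative error.

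The main obstacle is conceptual rather than computational: one must be careful that the $\kappa$ in the theorem is the condition number of $A$ and not of the normal-equations matrix, and that the correct variant (CGNE/Craig, which minimizes the error rather than the residual) is used so that the CG energy norm really is the controlled quantity $\|x^{k}-x\|_{2}$. Getting this identification right is exactly what reconciles the $\kappa$ here with the $\sqrt{\kappa}$ bound quoted earlier for CG on an SPD system. The remaining logarithmic estimate is routine once the $\mathrm{arctanh}$ inequality is invoked.
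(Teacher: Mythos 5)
Your proposal is correct, and it reconstructs the standard argument behind the bound: the paper itself gives no in-text proof, deferring entirely to \cite{Shewchuk1994}, whose Chebyshev/minimax analysis is exactly the convergence estimate you invoke. What your write-up adds beyond the bare citation is the CGNE-specific bookkeeping, which the citation alone does not supply: Shewchuk's bound is stated for CG applied to an SPD system and controls the \emph{energy-norm} error, whereas the theorem here asserts a $2$-norm bound for Craig's method (Algorithm 8.5 of \cite{iterativelsp}) on a general matrix. Your identity $\|x^{k}-x\|_{2}=\|y^{k}-y\|_{AA^{T}}$, together with $\kappa(AA^{T})=\kappa(A)^{2}$ so that $\sqrt{\kappa(AA^{T})}=\kappa$, is precisely the bridge that turns the cited $\tfrac{1}{2}\sqrt{\kappa}\log(2/\epsilon)$ estimate for CG on an SPD system into the stated $\tfrac{1}{2}\kappa\log(2/\epsilon)$, and it correctly explains why this theorem carries $\kappa$ while the paper's earlier complexity quote for plain CG carries $\sqrt{\kappa}$ --- a reconciliation the paper leaves implicit. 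Your closing step, $\log\tfrac{\kappa+1}{\kappa-1}=2\,\mathrm{arctanh}(1/\kappa)\ge 2/\kappa$ for $\kappa>1$, is valid and gives the inequality in the right direction (the exact threshold is bounded \emph{above} by $\tfrac{1}{2}\kappa\log(2/\epsilon)$, so any $k$ at least that large suffices). In short, your proof follows the same route the paper's citation points to, but is more complete than the paper's treatment, since it makes the normal-equations identification and the norm conversion explicit rather than assumed.
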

Proof can be found in \cite{Shewchuk1994}. 

For a $s$-sparse Hermitian matrix, the total number of the FLOPs are 

 \begin{align}(\frac{1}{2}\kappa \log(\frac{2}{\epsilon}))\times (4Ns+6N+8N)\end{align}

\subsection{Cholesky Decomposition Method}

Cholesky decomposition, combined with subsequent Gaussian elimination, can be efficiently parallelized with minimal overhead, particularly for large matrices. This allows us to maintain intermediate matrices in a sparse form, enabling storage within GPU memory.

The algorithm based on Cholesky decomposition consists of the following steps:

\begin{enumerate}
    \item Given a permutation of rows and columns of the linear equation matrix $P$ and vector $b$, optimize the process for solving $Px = b$.
    \item Construct the necessary data structures to compute and store $P$.
    \item Perform Cholesky decomposition: $P = L L^\top$, where $L$ is a sparse lower triangular matrix.
    \item Solve the triangular systems: first, solve $L y = b$ for $y$, then solve $L^\top x = y$ for $x$.
\end{enumerate}

For comparison with quantum algorithm performance evaluations, we exclude GPU data loading and unloading time from our analysis.

Additionally, we conclude that each step of the algorithm is fully parallelizable, with the exception of synchronization overhead, which remains negligible.

We calculate the number of floating-point-operations (FLOPs) for each step, where:
\begin{itemize}
    \item $N$ represents the size of matrix $P$ and vector $b$.
    \item $s$ is the maximum number of non-zero elements per row (or column), assuming the matrix is symmetric.
\end{itemize}

\textbf{Step 1: Permutation of Rows and Columns}

Exploiting matrix symmetry, we traverse each row and swap matrix elements as necessary, leading to:
\begin{equation}
    \text{FLOPs} = N \cdot s \cdot (s + 1)
\end{equation}

\textbf{Step 2: Data Structure Construction}

Assuming the matrix is already stored in an optimal format, the primary operation is generating the index structures, which takes:
\begin{equation}
    \text{FLOPs} = N \cdot s
\end{equation}

\textbf{Step 3: Cholesky Decomposition}

For each row, we perform factorization computations, resulting in:
\begin{equation}
    \text{FLOPs} = N \cdot (1 + s + 2s^2)
\end{equation}

\textbf{Step 4: Triangular Solve}

This involves solving two triangular systems, first for $y$ and then for $x$:
\begin{equation}
    \text{FLOPs} = N \cdot (1 + s) \cdot 4
\end{equation}

\textbf{Total FLOPs}

Summing all the FLOPs from the above steps:
\begin{align}\label{eq:CDFLOPs}
    \text{Total FLOPs} &= N \cdot s \cdot (s+1) + N \cdot s + N \cdot (1 + s + 2s^2) + N \cdot (1 + s) \cdot 4 \\
    &= N \cdot (3s^2 + 7s + 5)
\end{align}

Since the CD method is fully parallelizable, we can estimate the runtime by simply divide the FLOPS by the supercomputer's execution speed. According to \cite{top500_2024}, some of the state-of-the-art supercomputer parameters are summerized in \ref{tab:supercomputer_FLOPs}. 

\begin{table}[h]
    \centering
    \begin{tabular}{|l|l|}
        \hline
        \textbf{Name} & \textbf{FLOP/s} \\
        \hline
        Aurora & $1.012 \times 10^{18}$ \\
        El Capitan FP64 & $2.726 \times 10^{18}$ \\
        El Capitan FP32 & $5.453 \times 10^{18}$ \\
        El Capitan FP16 & $4.361 \times 10^{19}$ \\
        El Capitan FP8 & $8.721 \times 10^{19}$ \\
        Frontier FP64/32 & $1.810 \times 10^{18}$ \\
        \hline
    \end{tabular}
    \caption{Supercomputer FLOP/s Comparison}
    \label{tab:supercomputer_FLOPs}
\end{table}

Using these parameters—assuming $ k = 10 $, $ s = \log N $, $ \epsilon = 0.01 $, and a quantum computer clock cycle of 10 ns—along with a fast block implementation where a logical $T$-operation is executed per clock cycle, we can compute the runtime of the CD method on a supercomputer and compare it with the Quantum Linear Systems Algorithm (QLSA). The results of this comparison are presented in Figure \ref{fig:CDruntime}.  

\begin{figure}
    \centering
    \includegraphics[width=0.5\linewidth]{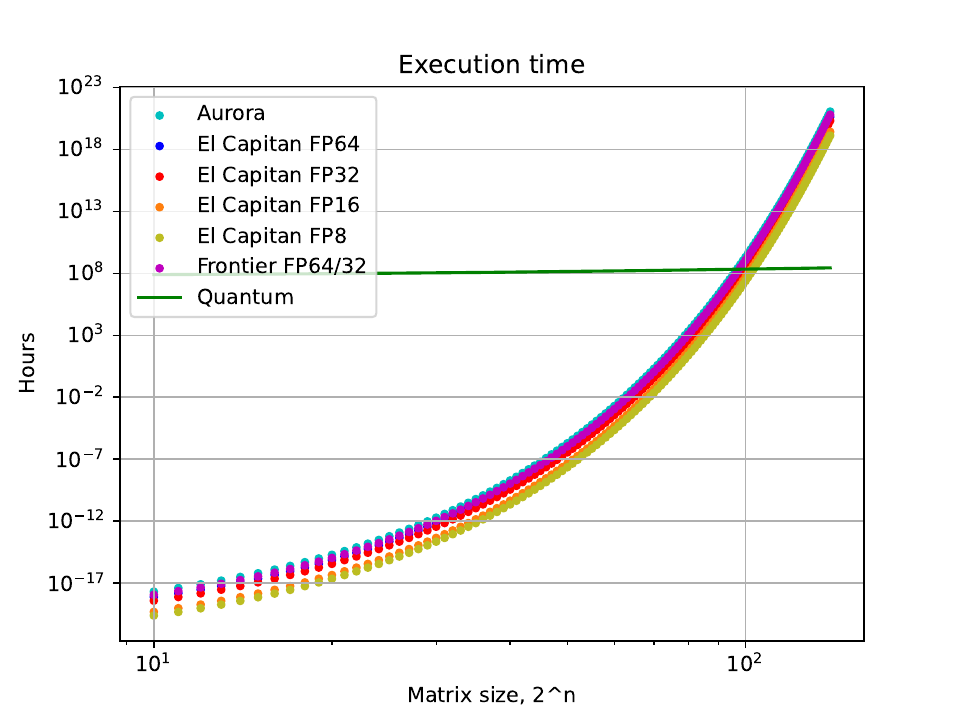}
    \caption{Runtime comparison of the CD method across different supercomputers.}
    \label{fig:CDruntime}
\end{figure}  

Our analysis indicates that the CD method exhibits a time complexity of $ 3Ns^2 $, which, in contrast to the CG method (see Statement \ref{statement:classical}), has a less favorable dependence on $ s $. However, a key advantage of this algorithm is its high degree of parallelizability, enabling efficient execution on supercomputers. Leveraging this capability significantly enhances its performance in terms of runtime. Consequently, the onset of potential quantum advantage may be delayed, as illustrated in Figure \ref{fig:CDruntime}, where it is projected to emerge at a problem size of $ 2^{100} \approx 10^{30} $. Nevertheless, since the runtime of QLSA doesn't scaling much with matrix size, the runtime required to reach the potential quantum advantage threshold remains approximately 200 hours.

\section{Quantum energy analysis}

The energy consumption of a quantum computer can be estimated as the product of three factors: the scale of the quantum computer (i.e., the number of physical qubits), the energy efficiency of the quantum computer (i.e., power consumption per qubit), and the operating time. This relationship is expressed as:
\begin{equation}
    E = n_q \times P_q \times T, \label{eq:total_energy}
\end{equation}
where:
\begin{itemize}
    \item $E$ is the total energy consumption,
    \item $n_q$ is the number of physical qubits,
    \item $P_q$ is the power consumption per physical qubit, and
    \item $T$ is the total operating time.
\end{itemize}

The values of $n_q$ and $T$ are determined from prior analysis, leaving $P_q$ as the main parameter to be explained in detail.

\subsubsection{Power consumption per qubit}

As derived in \cite{martin2021energyusequantumdata}, the power consumption per physical qubit, $P_q$, can be expressed as:
\begin{equation}
    P_q = q \left[ 1 + \phi \left( \frac{1 + \beta n_p^{-1/3}}{\eta_c \text{COP}(T_c)|_c} \right) + \frac{1 - \phi}{\text{FOM}(T_o)} \right], \label{eq:power_consumption}
\end{equation}
where:
\begin{itemize}
    \item $q$ is the computational power per physical qubit,
    \item $\phi$ represents the fraction of power used for direct cooling,
    \item $\beta$ is the external heat conduction ratio,
    \item $n_p$ is the number of physical qubits per logical qubit,
    \item $\eta_c$ is the efficiency of the cooling system,
    \item $\text{COP}(T_c)|_c$ is the coefficient of performance (COP) of the cooling system at the chip temperature $T_c$, and
    \item $\text{FOM}(T_o)$ is the figure of merit for the secondary cooling system.
\end{itemize}

The second term in Eq.~\eqref{eq:power_consumption} accounts for direct heat dissipation, while the third term represents secondary cooling power. Given the early development stage of quantum computers, accurate hardware parameter values for future large-scale fault-tolerant quantum computers remain uncertain, and parameter estimates can vary widely.

\subsubsection{Simplified estimation of $P_q$}

For a rough estimation of $P_q$ based on current quantum computer parameters, we make the following assumptions:
\begin{itemize}
    \item The direct heat dissipation power dominates over computational power and secondary cooling power, allowing the omission of the first and third terms in Eq.~\eqref{eq:power_consumption}.
    \item The volume of the quantum computer is proportional to the number of physical qubits.
\end{itemize}

Under these assumptions, $P_q$ simplifies to:
\begin{equation}
    P_q = \frac{\tilde{q}}{\eta_c \text{COP}(T_c)|_c}, \label{eq:simplified_power}
\end{equation}
where $\tilde{q}$ is a parameter to be determined.

\subsubsection{Determination of $\tilde{q}$}

Based on data from \cite{parker2023estimating}, IBM's Quantum System Two dilution refrigerator can house 4,158 qubits while consuming 26 kW of power. This yields a per-qubit power consumption of approximately $6.25$ W:
\begin{equation}
    P_q = \frac{26,000 \text{ W}}{4,158} \approx 6.25 \text{ W}.
\end{equation}

For superconducting qubits used by IBM, the coefficient of performance (COP) of the cooling system is approximately $10^{-5}$. Substituting into Eq.~\eqref{eq:simplified_power}, we obtain:
\begin{equation}
    \tilde{q} = 6.25 \times 10^5.
\end{equation}

\subsubsection{Range of $P_q$ for different architectures}

Assuming $\tilde{q}$ remains constant across different architectures, $P_q$ depends primarily on the COP. For a COP ranging from $10^{-5}$ to $10^{-2}$, the power consumption per qubit varies between $6.25$ W and $0.0625$ W.

\subsubsection{Conclusion}

This analysis provides an approximate estimation of the energy consumption for quantum computers, highlighting the dependence of power consumption on cooling efficiency and quantum computer architecture. Future work will refine these estimates as hardware parameters become more concrete.

\section{Classical energy analysis}
The estimation of classical energy consumption follows the formula:

\begin{equation}\label{eq:cenergy}
    E_c = t P_c = \frac{op}{f_c} P_c = op \frac{P_c}{f_c}
\end{equation}

where $E_c$ represents the total energy consumption, $t$ denotes the runtime, $op$ is the total number of clock cycles, $f_c$ stands for the clock frequency, and $P_c$ is the power consumption. Consequently, determining the energy efficiency of CPUs, expressed as $\frac{P_c}{f_c}$, is crucial for performance evaluation.

Most mainstream desktop CPUs exhibit power consumption ranging from 50 Watts to 400 Watts, with clock frequencies spanning from 1 GHz to 5 GHz. Given that power and frequency are positively correlated~\cite{cpupower}, the power efficiency typically falls within the range of 50 Watts per GHz to 80 Watts per GHz. In this analysis, we assume a representative value of 50 Watts per GHz to estimate the CPU energy consumption of the algorithm.

\end{document}